\title{A Linear Time Algorithm for Finding Minimum Flip Sequences between Plane Spanning Paths in Convex Point Sets\footnote{This research was funded in part by the Austrian Science
		Fund (FWF) 10.55776/DOC183.}}
\titlerunning{A Linear Time Algorithm for Minimum Flip Sequences}
\author[1]{Oswin Aichholzer}
\author[2]{Joseph Dorfer}
\affil[1]{Graz University of Technology\\
	\texttt{oswin.aichholzer@tugraz.at}}
\affil[2]{Graz University of Technology\\
	\texttt{joseph.dorfer@tugraz.at}}
\authorrunning{O. Aichholzer and J. Dorfer}
\theoremstyle{plain}
\theoremstyle{definition}
\theoremstyle{remark}
\theoremstyle{definition}
\begin{document}
	
		\maketitle
	
	\begin{abstract}
		We provide a linear time algorithm to determine the flip distance between two plane spanning paths on a point set in convex position. At the same time, we show that the happy edge property does not hold in this setting. This has to be seen in contrast to several results for reconfiguration problems where the absence of the happy edge property implies algorithmic hardness of the flip distance problem. Further, we show that our algorithm can be adapted for (1) compatible flips (2) local flips and (3) flips for plane spanning paths in simple polygons.
	\end{abstract}
	
	\section{Introduction}
	
	Let $S$ be a finite point set in the plane in general position, that is, no three points lie on a common line. We call $S$ a \emph{convex} point set if no point of $S$ lies in the interior of the convex hull of $S$. For a convex point set with $n$ vertices we label the points $v_0$,\ldots,$v_{n-1}$ in clockwise order, starting with an arbitrary vertex. A plane straight-line drawing of a graph on $S$ is a graph with vertex set $S$ and whose edges are straight line segments between pairs of points of $S$ such that no two edges intersect, except at a common endpoint. Throughout this paper we will refer to graphs and their drawings interchangeably.
	
	\textbf{Flips in Plane Spanning Paths.} A \emph{plane spanning path} (or in the following sometimes just \emph{path}) on a point set $S$ is a plane graph with $\lvert S \rvert -1$ edges that is cycle-free, and in which every vertex has degree at most $2$. A \emph{flip} is an operation that removes one edge from a plane spanning path and adds another edge such that the resulting structure is again a plane spanning path. With this notion of a flip we can define the \emph{flip graph} as an abstract graph that has as vertex set the set of all plane spanning paths on $S$. Two vertices of the flip graph are connected by an edge if and only if their corresponding paths can be transformed into one another via a single flip.
	Given an initial plane spanning path $P_{in}$ and a target path~$P_{tar}$ on S, a \emph{flip sequence} from $P_{in}$ to $P_{tar}$ is a sequence of plane spanning paths $P_{in}=P_0$, $P_1$, \ldots, $P_k=P_{tar}$ such that two consecutive paths differ only by a flip. Equivalently, a flip sequence can be described as a path from $P_{in}$ to $P_{tar}$ in the flip graph.
	The \emph{length} of the flip sequence is the number $k$ of flips to transform $P_{in}$ into $P_{tar}$. The \emph{flip distance} between~$P_{in}$ and~$P_{tar}$ is the minimum $k$ for which a flip sequence from $P_{in}$ to $P_{tar}$ of length $k$ exists. A flip sequence that realizes this minimum will be called a \emph{minimum flip sequence}. Again, in the notion of a flip graph, minimum flip sequences correspond to shortest paths in the flip graph and the flip distance denotes the length of such a shortest path.
	
	In \cite{aichholzer2023flipping} a characterization of flips in plane spanning paths into three types is given. See Figure \ref{types} for an illustration. Consider a path $P~=~p_1,p_2,\ldots,p_n$ where the points are given in order of traversal by $P$. A flip of \emph{Type~1} removes an edge $(p_{i-1},p_i)$ and adds an edge $(p_1,p_i)$ or $(p_{i-1},p_n)$ if they do not cross any other edge. This results in a new path,~$p_{i-1},\ldots,p_1,p_i,\ldots,p_n$ or $p_1,\ldots,p_{i-1},p_n,\ldots,p_{i}$, respectively. A flip of \emph{Type~2} adds the edge~$(p_1,p_n)$ assuming this edge does not cross any of the already existing edges. Afterwards, an arbitrary edge~$(p_{i-1},p_i)$ from the original path can be removed. Note that adding and removing consecutive edges when closing a cycle can be interpreted as both a Type~1 flip and a Type~2 flip. For simplicity of notation we will count such flips as Type~2 flips. A flip of \emph{Type~3} also adds the edge~$(p_1,p_n)$ but now it is assumed to intersect exactly one edge~$(p_{i-1},p_i)$ in~$P$ which is then removed within this flip operation.
	
	\begin{figure}[ht]
		\centering
		\includegraphics[width=0.7\textwidth]{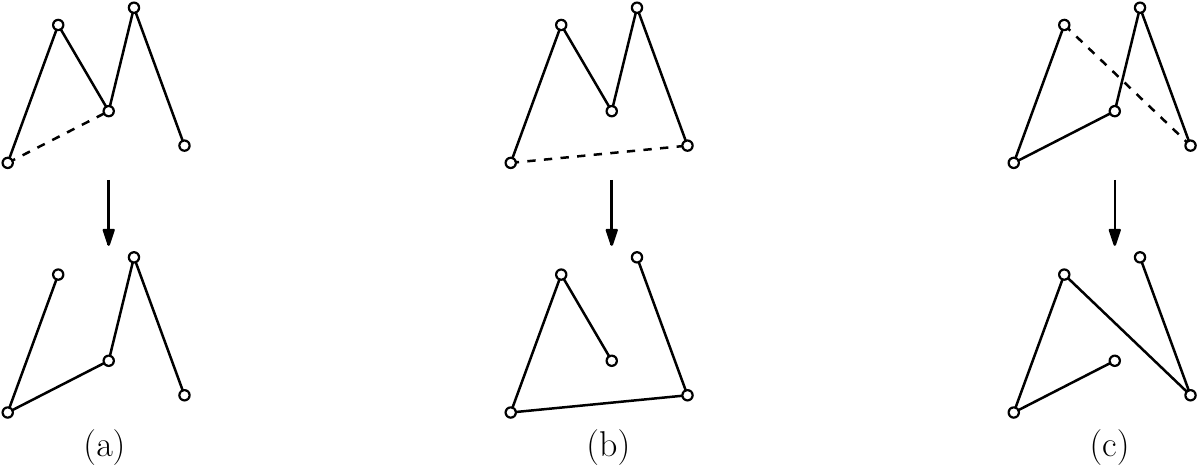}
		\caption{Flips in plane spanning paths: (a) Type 1 flip (b) Type 2 flip (c) Type 3 flip}
		\label{types}
	\end{figure}
	
	\textbf{Related work.} There are three central questions about reconfiguration problems that are considered in the literature:
	
	\begin{itemize}
		\item[(1)] Is the flip graph connected, that is, can we transform any configuration into any other via our given flip operation?
		\item[(2)] What is the diameter of the flip graph, that is, how many flips does it take in the worst case to flip one configuration into another configuration?
		\item[(3)] What is the complexity of computing minimum flip sequences between a given pair of configurations?
	\end{itemize}
	
	It is an interesting open problem, whether every plane spanning path of a given point set can be transformed into any other spanning path on the same point set. For the special cases of convex point sets \cite{AKL200759}, wheel sets, generalized double circles \cite{aichholzer2023flipping}, and point sets with at most two convex layers~\cite{kleist2024connectivityflipgraphplane}, it has been shown that we can always flip from one path to another. In \cite{kleist2024connectivityflipgraphplane} the existence of a large connected component in the flip graph of plane spanning paths-called suffix-independent paths-is derived. Notably, all the currently known connectivity results do not rely on any Type 3 flips. In the case of convex point sets the diameter of the flip graph is known to be exactly $2n-6$ for~$n>4$~\cite{CHANG2009409}.
	
	A related flip operation is given by flips in plane spanning trees. For plane spanning trees the flip graph is known to be connected with radius $n-2$~\cite{AVIS199621,bjerkevik2024flippingnoncrossingspanningtrees}. There have been several results with regard to bounds on the diameter of the flip graph~\cite{aichholzer2024reconfiguration,bjerkevik2024flippingnoncrossingspanningtrees,bousquet_et_al:LIPIcs.SoCG.2024.22,bousquet2023notes,hhmmn-gtgpcp-99}. The currently best known bounds of the diameter are $\frac{14n}{9}-O(1)$ as a lower bound and~$\frac{5n}{3}-3$ as an upper bound~\cite{bjerkevik2024flippingnoncrossingspanningtrees}. For more related results on flips in other planar graph structures we refer to the survey by Bose and Hurtado \cite{BOSE200960}. % Another well studied variant of flips is given by flips in triangulations \cite{Hurtado1999,Lawson,POURNIN201413,2ae353801697435f901a750248959ba2}.There are also flip operation in which more than one edge is exchanged at a time such as flips in plane perfect matchings \cite{articlematchings,articlematchings_2}.
	
	\textbf{Happy Edges.} \emph{Happy edges} are edges that lie in both, the initial configuration and the target configuration of a graph reconfiguration problem. The so-called \emph{happy edge property} says that there always exists a minimum flip sequence between two configurations in which no edge is removed and later flipped in again. This implies that a happy edge is never flipped in such a sequence. The happy edge property may or may not hold for certain reconfiguration problems. Whether the happy edge property holds, can be a good indication for the complexity of a reconfiguration problem. For example, for triangulations of simple polygons~\cite{Aichholzer2015} and general point sets~\cite{LUBIW201517,Pilz_2014} finding minimum flip sequences is NP-hard and the gadgets in the proofs are built around conterexamples to the happy edge property. Conversely, the happy edge property holds for plane perfect matchings of convex point sets and a minimum flip sequence can be found in polynomial time \cite{articlematchings}. On the other hand, the happy edge property holds for triangulations of convex polygons~\cite{2ae353801697435f901a750248959ba2}, but the question about the complexity of finding minimum flip sequences is still open.
	
	% Moreover, the happy edge property yields multiple fixed-parameter tractable algorithms for finding minimum flip sequences \cite{articlef%pt2,clear%y2009rotation,arti%clefpt}.
	
	\textbf{Our Contributions.} In this paper we show that the happy edge property does not hold for plane spanning paths of convex point sets. This adds to the already known counterexamples in point sets in general position \cite{Felsner}. Simultaneously, we provide an approach for finding minimum flip sequences between pairs of plane spanning paths on convex point sets in linear time. Interestingly, this is in contrast to all the previously mentioned results where the absence of the happy edge property implied hardness of finding minimum flip sequences.
	
	Further we show that our observations carry over to some variants of the problem. Our characterizations also yield a linear time algorithm if we (1) restrict the flips to be compatible, that is, we do not allow crossing Type 3 flips to happen; (2) make the problem more local by only allowing flips in which the removed edge and the added edge share an endpoint. This can also be interpreted as assigning different costs to Type 2 flips based on how non-local they are; or (3) consider paths in simple polygons.

	\section{Counterexample to the Happy Edge Property}
	
	\begin{figure}[ht]
		\centering
		\includegraphics[width=0.9\textwidth]{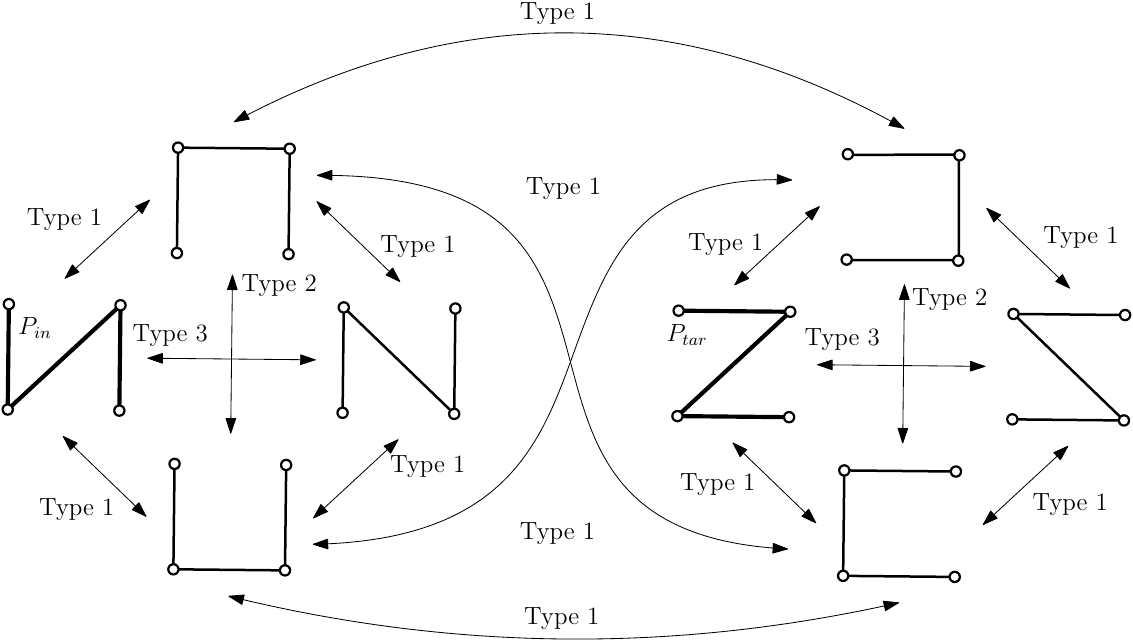}
		\caption[Smallest Counterexample to the happy edge property for spanning paths]{Flip graph of paths on four points in convex position with a counterexample to the happy edge property. Indicated are the different types of flips and the initial and target paths.}
		\label{smaller_example}
	\end{figure}
	
	\break
	
	In Figure~\ref{smaller_example} we show the flip graph of all eight plane spanning paths on four points in convex position. The initial path $P_{in}$ and the target path $P_{tar}$ are marked. Both paths share a diagonal, but no other path contains that particular diagonal. Moreover, the two paths cannot be directly transformed into one another via one flip. Therefore, this pair of paths provides a counterexample to the happy edge property for minimum flip sequences between plane spanning paths in convex point sets. Actually, in this example there is no flip sequence from $P_{in}$ to $P_{tar}$ at all that does not need to flip the happy edge.
	
	\section{Basics on Flips in Paths\label{basics}}
	
	In this section we show some basic relations for flips between plane paths.
	Due to visibility constraints Type 2 flips can only happen when all edges of the current path lie on the convex hull. %because a diagonal will split the point set into two parts, each part containing one of the endpoints of the path.
	%Therefore, any edge between the two endpoints would have to cross the diagonal. 
	For Type 1 flips we obtain the following result.
	
	\begin{lemma} \label{obs1}
		Type 1 flips either remove a diagonal or add a diagonal, but not both at the same time. If diagonals exists, we can always lower the number of diagonals by one Type 1 flip.
	\end{lemma}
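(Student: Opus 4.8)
The plan is to treat the two sub-types of a Type~1 flip separately; since reversing the traversal order $p_k \mapsto p_{n+1-k}$ turns an added edge $(p_{i-1},p_n)$ into one of the form $(p_1,p_j)$ (and the removed edge into $(p_{j-1},p_j)$), it suffices to handle the variant that removes $(p_{i-1},p_i)$ and adds $(p_1,p_i)$, with $3 \le i \le n-1$ (the value $i=n$ would add $(p_1,p_n)$ and is counted as a Type~2 flip). The central observation I would isolate first is this: a Type~1 flip is valid exactly when the added edge crosses no existing edge, so in particular the cycle $p_1,p_2,\dots,p_i,p_1$ formed by the prefix together with the new edge is plane. Since $p_1,\dots,p_i$ are in convex position and the only plane Hamiltonian cycle on points in convex position is their convex hull boundary, this cycle must be exactly the boundary of $D:=\mathrm{conv}(\{p_1,\dots,p_i\})$. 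Consequently the removed edge $(p_{i-1},p_i)$ and the added edge $(p_1,p_i)$ are precisely the two edges of the sub-polygon $D$ incident to $p_i$, which is the structural fact everything else rests on.

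With this in hand, each of the two edges at $p_i$ is a hull edge of $S$ iff the corresponding neighbour of $p_i$ along $\partial D$ (namely $p_1$, resp.\ $p_{i-1}$) is also the full-hull neighbour of $p_i$ on that side, i.e.\ iff the arc of $S$ lying ``outside'' that edge of $D$ contains no further point. To rule out \emph{two hull edges} I would argue that if both $(p_1,p_i)$ and $(p_{i-1},p_i)$ are hull edges of $S$, then the interior cone of the full hull at $p_i$ coincides with the interior cone of $D$ at $p_i$; then $p_{i+1}$, being a point of $S$, lies inside this cone, so the segment $(p_i,p_{i+1})$ enters $D$ and must leave it across a prefix edge, contradicting planarity of the original path. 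To rule out \emph{two diagonals} I would instead use the chord $(p_1,p_i)$: validity forces $\{p_2,\dots,p_{i-1}\}$ and $\{p_{i+1},\dots,p_n\}$ onto opposite open arcs of this chord, so if $(p_{i-1},p_i)$ were also a diagonal there would be interior-prefix vertices strictly between $p_{i-1}$ and $p_i$; these are separated from $p_1$ by the chord $(p_{i-1},p_i)$, yet the prefix subpath must join them to $p_1$ while meeting $p_{i-1}$ only as a degree-one endpoint, forcing a prefix edge to cross the edge $(p_{i-1},p_i)$ — again impossible. Together these show exactly one of the two edges is a diagonal, proving the first assertion; the main obstacle is bookkeeping the crossing arguments carefully, in particular verifying that a ``line crossing'' is genuinely a crossing of the chord \emph{segment} (which holds because all points lie inside the convex hull).

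For the second assertion I would show that whenever a diagonal exists, at least one endpoint admits a Type~1 flip that \emph{adds} a hull edge; by the first assertion such a flip removes a diagonal and lowers their number by one. Adding a hull edge never creates a crossing, so endpoint $p_1$ yields such a flip unless both full-hull neighbours of $p_1$ lie in $\{p_2,p_n\}$ (the neighbour $p_2$ gives an edge already present, and $p_n$ would make it a Type~2 flip), and symmetrically for $p_n$. If \emph{both} endpoints are blocked, then $p_1$ and $p_n$ are adjacent on the hull, so adding the hull edge $(p_1,p_n)$ closes the spanning path into a plane Hamiltonian cycle on all of $S$; by the uniqueness fact this cycle is the full hull, whence the path consists entirely of hull edges and has no diagonal, contradicting our assumption. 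Hence some endpoint is unblocked and the required diagonal-reducing flip exists. I expect the uniqueness-of-plane-Hamiltonian-cycle fact to carry essentially all of the weight in both parts, with the remaining effort being the geometric case distinctions at $p_i$ and at the two endpoints.
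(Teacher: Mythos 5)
Your proof is correct in substance but takes a genuinely different route from the paper. The paper splits by the type of the \emph{added} edge and argues via visibility from an end vertex: if a convex hull edge is added, the first diagonal $d=(v_a,v_b)$ reached when traversing from the end vertex $v$ is the unique edge that can be removed (and the only addable hull edge at $v$ is $(v,v_b)$), which proves the first assertion in that case and simultaneously yields the second assertion constructively; if a diagonal is added, the closed cycle consists of the diagonal plus hull edges, so a hull edge is removed. Your engine is instead the uniqueness of the plane Hamiltonian cycle on points in convex position: the prefix cycle must be $\partial D$, so the removed and added edges are exactly the two edges of $D$ at $p_i$, after which you exclude ``both hull edges'' (cone argument at $p_i$ using $p_{i+1}$, valid since you restricted to $3\le i\le n-1$) and ``both diagonals'' (arc-separation argument). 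Your second-assertion proof is also genuinely different and arguably slicker: rather than exhibiting the paper's explicit flip at the first visible diagonal, you show by contradiction that some endpoint is unblocked, since both endpoints blocked forces $(p_1,p_n)$ to be a hull edge whose addition closes a plane Hamiltonian cycle, hence the path would be diagonal-free. One caveat: your separation claim, that ``validity forces $\{p_2,\dots,p_{i-1}\}$ and $\{p_{i+1},\dots,p_n\}$ onto opposite open arcs of $(p_1,p_i)$,'' does not follow from validity alone --- it is false when $(p_1,p_i)$ is a hull edge (then the suffix sits on the same arc as the prefix interior, e.g.\ in the pocket behind $(p_{i-1},p_i)$, which the suffix may enter through $p_i$ without any crossing). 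It does hold in the case where you use it, namely when $(p_1,p_i)$ is a diagonal, but that requires an argument: the open arc $A_2$ opposite the prefix interior is then nonempty and consists of suffix vertices (by the boundary order of $\partial D$), and if the suffix also had a vertex on the prefix-side arc, some suffix edge with both endpoints distinct from $p_1,p_i$ would have endpoints interleaving with $(p_1,p_i)$ and hence cross the added chord, contradicting validity. This is exactly the kind of crossing bookkeeping you flagged, so the gap is repairable within your framework; with it filled in, both assertions go through. What the paper's approach buys is a direct construction of the diagonal-removing flip (useful later for the algorithm); what yours buys is a single structural fact carrying both assertions with less case-by-case traversal analysis.
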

	
	\begin{proof}
		W.l.o.g.\ we say that we flip from a path $P_{in}$ to a path $P_{tar}$ where $v$ is the end vertex of $P_{in}$ that is involved in the Type 1 flip. See Figure \ref{Case12} for an illustration.
		We distinguish between two cases, depending on which type of edge is added during the flip.
		
		\textbf{Case 1:} A convex hull edge is added. We know that not all edges can be located on the convex hull, otherwise the only flip that can add a convex hull edge is a Type 2 flip, which we excluded by assumption. Therefore, there exist diagonals in $P_{in}$. 
		The only way, how a diagonal $d=(v_a,v_b)$ can be visible from an end vertex $v$ of $P_{in}$ is, if $d$ is the first diagonal to appear when traversing $P_{in}$ starting from $v$. All edges of $P_{in}$ that appear before~$d$ when traversing $P_{in}$ are convex hull edges.
		
		Without loss of generality, assume $v_a$ is reached before $v_b$ when traversing $P_{in}$ starting from~$v$. The only convex hull edge that contains $v$ which can be added to $P_{in}$ is $(v,v_b)$. Adding this edge results in removing the diagonal~$d$ since it is the unique next edge along the cycle that is closed within $P_{in}$ when adding $(v,v_b)$.	
		Therefore, $P_{tar}$ contains one diagonal less than $P_{in}$.
		
		\textbf{Case 2:} A diagonal $d = (v,v_a)$ is added. Note that $d$ is not crossed by an edge of $P_{in}$ as we have a Type 1 flip. Since $P_{in}$ is connected there has to exist a subpath of $P_{in}$ consisting of edges on the convex hull that connects $v$ with $v_a$. Thus, adding $d$ to $P_{in}$ results in a cycle that consists of $d$ and convex hull edges. Consequently, a convex hull edge is removed in the flip.	
		Therefore, $P_{tar}$ contains one diagonal more than $P_{in}$.
	\end{proof}
	
	\begin{figure}[ht]
		\centering
		\includegraphics[width=0.9\textwidth]{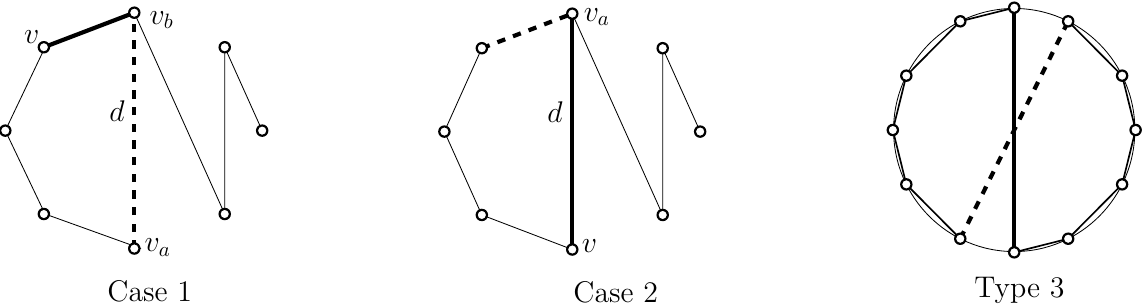}
		\caption{Case 1 (left): Adding a convex hull edge (fat), removing a diagonal (dashed). Case 2 (middle): Adding a diagonal (fat), removing a convex hull edge (dashed). For a Type 3 flip (right), we need all edges but one diagonal to be on the convex hull.}
		\label{Case12}
	\end{figure}
	
	%\begin{Observation} \label{flipsofdiags}
	The intuition behind Lemma~\ref{obs1} can be seen in Figure~\ref{Case12}. It shows all possible flips that can involve diagonals: Removing the unique diagonal $d$ that is visible from an end vertex~$v$ by adding a convex hull edge between $v$ and the vertex $v_b$ incident to~$d$ (left), adding a diagonal~$d$ by removing a convex hull edge (middle), and performing a Type 3 flip that exchanges the only diagonal with a diagonal that is rotated by one vertex (right). Observe that Type 3 flips need a special set-up. The diagonals that are involved in the flip need to be the only diagonals of their respective paths. Also, the convex hull edges need to coincide but emanate from opposite vertices.
	%\end{Observation}
	
	\begin{figure}[ht]
		\centering
		\includegraphics[width=0.9\textwidth]{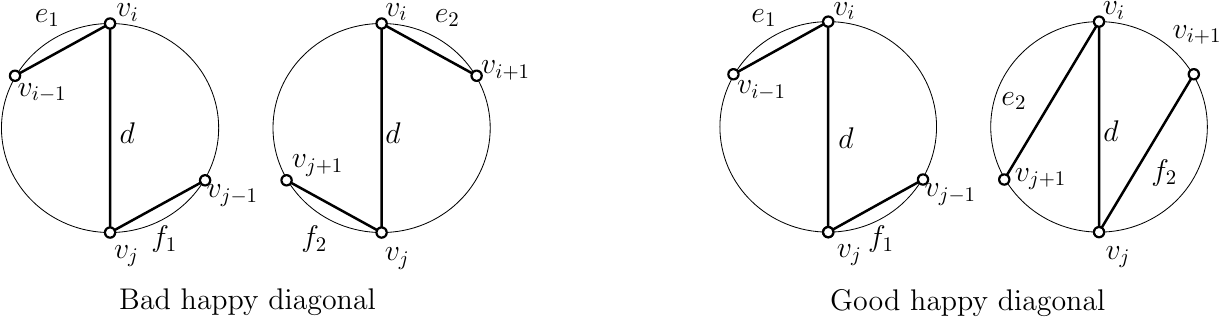}
		\caption[Good happy diagonals vs. bad happy diagonals.]{Bad happy diagonal $d$ with edges emanating to different sides in different paths (left), and good happy diagonal $d$ with edges emanating to the same side in different paths (right).} 
		\label{Good_bad}
	\end{figure}

	Let $d=(v_i,v_j)$ be a happy edge that is a diagonal. Note that $d$ splits the convex point set into two parts and that the two edges adjacent to $d$ emanate into one of the parts each. Let $d$ be adjacent to~$e_1$ at~$v_i$ and $f_1$ at $v_j$ in the initial path $P_{in}$, and to $e_2$ at $v_i$ and $f_2$ at $v_j$ in the target path~$P_{tar}$. We call $d$ a \emph{good happy diagonal} if $e_1$ and $e_2$ emanate to the same side of $d$, that is, the second vertices of $e_1$ and $e_2$ are either both in $\{v_{i+1},v_{j-1}\}$ or both in $\{v_{j+1},v_{i-1}\}$ where the vertices are considered cyclically. Otherwise, we call $d$ a \emph{bad happy diagonal} (if $e_1$ and $e_2$ emanate to different sides). See Figure~\ref{Good_bad} for an illustration.
	
	%\begin{Observation}\label{alreadygood}
	%If there exist two consecutive happy edges and one of them is a diagonal, then it is a good happy diagonal.
	%\end{Observation}
	
	\begin{lemma} \label{obs2} Consider an initial path $P_{in}$ and a target path $P_{tar}$.
		\begin{itemize}
			\item[(a)] For every flip sequence from $P_{in}$ to $P_{tar}$  any bad happy diagonal needs to be removed.
			\item[(b)] For every subpath $R$ of consecutive happy edges that contains at least one good happy diagonal there exists a flip sequence from $P_{in}$ to $P_{tar}$ that preserves all edges of $R$.
		\end{itemize}
	\end{lemma}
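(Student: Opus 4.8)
The plan is to build both parts on a single structural observation about how a diagonal constrains a path. Let $d=(v_i,v_j)$ be a diagonal present in some plane spanning path $P$, and let $A$ and $B$ be the two open arcs into which $d$ splits the vertices. First I would record that any segment joining an $A$-vertex to a $B$-vertex crosses $d$, so in $P$ no edge other than $d$ connects the two sides. Since $d$ occupies one of the at most two slots at $v_i$, the vertex $v_i$ is an endpoint of its component in $P-d$; hence that component lies entirely in $\{v_i\}\cup A$ or in $\{v_i\}\cup B$, and symmetrically for $v_j$. Thus carrying $d$ forces $P$ to be a Hamiltonian path on $\{v_i\}\cup(\text{one side})$, then $d$, then a Hamiltonian path on $\{v_j\}\cup(\text{other side})$. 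The decisive point is that this assignment of arcs to $v_i$ and $v_j$ is invariant under any flip that does not touch $d$: such a flip alters a single non-$d$ edge, which cannot cross $d$ and therefore stays within one side, so no vertex can migrate across.

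For part~(a) I would argue by contradiction. Suppose a flip sequence never removes a bad happy diagonal $d$. Then $d$ is present in every path of the sequence, so by the invariance above the arc glued to $v_i$ is the same in $P_{in}$ and $P_{tar}$. But this is exactly the defining property of a good happy diagonal, contradicting the assumption that $d$ is bad. Hence every flip sequence must remove $d$ at some point.

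For part~(b) I would first record a compatibility fact: if a happy diagonal has even one happy edge adjacent to it, then, since its two adjacent edges emanate to opposite sides in each path, the shared happy edge forces the other two to the same side, so the diagonal is automatically good. Consequently every happy diagonal of $R$ is good (any diagonal with a neighbour inside $R$ has a happy neighbour, and if $R$ is a single edge it is the good one guaranteed by hypothesis), so preserving all of $R$ is not already ruled out by~(a). Now I would use a good happy diagonal $d\in R$ as a cut. Because $d$ is good, the arcs $A,B$ are assigned to $v_i,v_j$ consistently in $P_{in}$ and $P_{tar}$, and the structural fact shows that any flip confined to one side keeps $d$ and the whole opposite side untouched and is automatically plane. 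Writing $R=R_A+d+R_B$, this reduces the task to reconfiguring each side independently while preserving the prefix $R_A$ (resp.\ $R_B$) emanating from the fixed endpoint $v_i$ (resp.\ $v_j$). I would then recurse on a side: if its preserved prefix still contains a diagonal, that diagonal is again good and I cut along it; once the preserved prefix consists only of hull edges of the current subpolygon it forms a contiguous arc, so the still-free vertices together with the last prefix vertex form a contiguous subpolygon in which every flip avoids crossing the preserved prefix. The base case is the claim that the flip graph of spanning paths on a convex set with one prescribed endpoint is connected; this follows from Lemma~\ref{obs1} by always eliminating the first diagonal seen from the other, free endpoint, since the corresponding Case~1 flip leaves the prescribed endpoint fixed, reaching one of the two hull paths through that endpoint, which are interconnected by a single flip. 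Flipping both $P_{in}$ and $P_{tar}$ to the same canonical form in every region and concatenating yields the desired sequence.

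The main obstacle I expect is the confinement requirement in~(b): it is not enough to preserve a single good diagonal, since I must reconfigure the rest of the path without ever crossing any edge of $R$. The argument hinges on the fact that all diagonals of $R$ are good, which lets me peel $R$ off along a nested family of cuts until the remaining work always takes place inside a contiguous subpolygon bounded by preserved hull edges. Verifying that this nesting is exhaustive, that the induced subinstances are genuinely convex point sets with a fixed endpoint, and that fixed-endpoint connectivity holds, is where the care is needed.
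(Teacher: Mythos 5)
Your proposal is correct in substance but takes a genuinely different route from the paper, most visibly in part~(b). The paper's proof is a one-shot canonicalization: using Lemma~\ref{obs1}, flip $P_{in}$ from each of its two end vertices, removing diagonals until the next diagonal visible from the current end lies in $R$; do the same for $P_{tar}$; the results consist of $R$ plus two subpaths of convex hull edges, and since the good happy diagonal fixes the side assignment and a hull-edge completion of a given contiguous vertex set from a given attachment vertex is unique, the two canonical paths coincide ($P'=P''$), so concatenating one sequence with the reverse of the other finishes. Your recursive scheme --- cut along $d$, note that every diagonal of $R$ is automatically good (the same observation the paper records in the paragraph after the lemma), peel the preserved prefix along nested good diagonals, and solve a fixed-endpoint base case by canonicalizing to one of the two hull paths of the remaining region --- establishes the same existence statement, and your geometric confinement checks do go through (added edges always have both endpoints on one closed side of every preserved cut diagonal, and chords of complementary arcs of a convex polygon cannot cross). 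What the paper's uniqueness argument buys is that no ``interconnecting'' flip between the two hull paths of a region is ever needed; your version can cost one extra flip per region, which is harmless for the lemma as stated but matters later, since the proof of Theorem~\ref{characterization}, Case~1, reuses exactly this construction to attain the sharp bound $k+l-2m$.

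In part~(a) your invariance formulation matches the paper's terse argument (both $e_1$ and $f_1$ would have to be exchanged, which no single flip can do), but your stated justification has a slip: ``such a flip alters a single non-$d$ edge, which cannot cross $d$ and therefore stays within one side'' is false for edges incident to $v_i$ or $v_j$, since an edge $(v_i,b)$ with $b\in B$ shares the endpoint $v_i$ with $d$ and crosses nothing. The invariance is nevertheless true, for a different reason: while $d$ is present, $P-d$ must consist of a Hamiltonian path on $\{v_i\}\cup(\text{one side})$ and one on $\{v_j\}\cup(\text{the other side})$, so swapping the assignment requires changing the non-$d$ edge at \emph{both} $v_i$ and $v_j$ in the same step; a flip removes only one edge, and the attempted intermediate (remove $(v_i,a)$, add $(v_i,b)$ with $b\in B$) yields a cycle or a degree-$3$ vertex rather than a spanning path. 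With that one sentence repaired, your proof of~(a) is sound and coincides with the paper's.
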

	
	\begin{figure}
		\centering
		\includegraphics[width=0.9\textwidth]{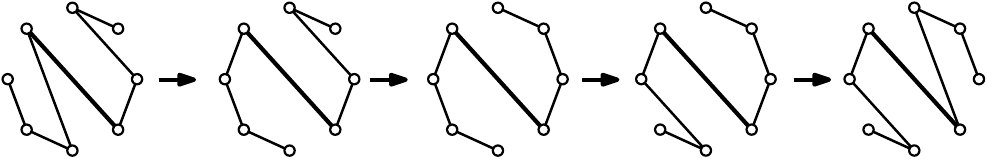}
		\caption{Illustration of Lemma \ref{obs2}(b): A flip sequence that preserves a good happy diagonal~$d$.}
		\label{good_happy}
	\end{figure}
	
	\begin{proof}
		Let $d$ be a bad happy diagonal, where $e_1$ and $f_1$ are its adjacent edges. For (a) observe that we cannot flip edges $e_1$ and $f_1$ in one step, so we need to flip $d$.
		
		For (b) start with $P_{in}$, pick one end vertex of $P_{in}$ and perform flips that remove diagonals until the next diagonal that is visible from the current endvertex lies in $R$. Then, repeat the process for the other endvertex of $P_{in}$. The resulting path $P'$ will then consist of $R$ and two (possibly empty) subpaths of convex hull edges. Do the same for $P_{tar}$ to obtain a path~$P''$. Since~$R$ contains a good happy diagonal and there are only two ways how the subpath of convex hull edges may look like, we get $P' = P''$. The required flip sequence from $P_{in}$ to~$P_{tar}$ consists of the flips from $P_{in}$ to $P'$ and the flips from $P_{tar}$ to~$P''$ in reverse order. For an example of such a flip sequence, see Figure \ref{good_happy}.
	\end{proof}
	
	We would like to point out the reason for studying diagonals in such detail. If we know all the diagonals of a path and the way the incident edges emanate from them, then we can already construct the entire path. As seen in Figure~\ref{intuition}, there is at most one way to complete the path between two consecutive diagonals. Also note that if a subpath of happy edges of length at least $2$ contains a diagonal, the diagonal has to be a good happy diagonal.
	These observations together simplify the problem of flipping an initial path into a target path to simply flipping all the diagonals with their emanating edges into the right place.
	
	\begin{figure}[ht]
		\centering
		\includegraphics[width=0.5\textwidth]{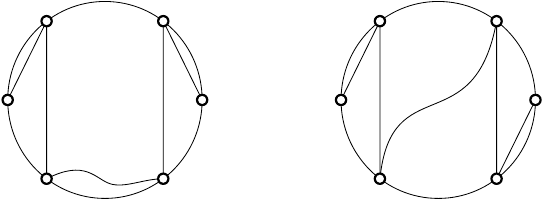}
		\caption{Diagonals and their emanating edges determine the whole path. The consecutive diagonals and emanating edges in the left path can be completed to a path, whereas the ones in the right path cannot be completed without adding another diagonal.}
		\label{intuition}
	\end{figure}
	
	\section{Characterization of Minimum Flip Sequences}
	
	Based on the structure of good happy diagonals and convex hull edges of the initial path and the target path, we provide a characterization of pairs of paths into four categories. We derive lower bounds on the number of flips and argue that for each category there exists a flip sequence that makes exactly this number of flips, thus providing a minimal flip sequence. Later, in Section \ref{linear}, we will show how to identify the four cases in linear time.
	
	\begin{theorem}\label{characterization}
		Let $P_{in}$ and $P_{tar}$ be two plane spanning paths for the same convex point set. Let $k$ and $l$ denote the number of diagonals of $P_{in}$ and $P_{tar}$, respectively. Then the flip distance between $P_{in}$ and $P_{tar}$ is described by the following cases.
		\begin{itemize}
			\item \textbf{Case 1:} If good happy diagonals exist, then let $m \geq 1$ be the maximal number of good happy diagonals in a subpath of consecutive happy edges. The flip distance is then $k+l-2m$.
			\item \textbf{Case 2:} If no good happy diagonal exists, then we distinguish three cases
			\subitem \textbf{Case 2a:} If there exists a pair of diagonals $d_1 \in P_{in}$, $d_2 \in P_{tar}$ that can eventually be exchanged for one another by a Type 3 flip, then the flip distance is $k+l-1$.
			\subitem \textbf{Case 2b:} If no Type 3 flip can be performed and if the diagonals can be flipped to convex hull edges in both, the initial and target path, such that the paths after flipping all diagonals coincide, the flip distance is $k+l$.
			\subitem \textbf{Case 2c:} Otherwise the flip distance is $k+l+1$.
		\end{itemize}
	\end{theorem}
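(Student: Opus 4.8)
The plan is to prove, for each of the four cases, a lower bound on the number of flips together with an explicit flip sequence meeting it. Throughout I would use the reduction noted after Lemma~\ref{obs2}: a path is completely determined by its diagonals together with the sides into which their incident edges emanate, so the whole task reduces to correctly creating, destroying, and repositioning diagonals. The bookkeeping tool is Lemma~\ref{obs1}: a Type~1 flip changes the number of diagonals by exactly $\pm 1$, a Type~2 flip only occurs between two all-hull paths (diagonal count $0$), and a Type~3 flip only occurs between two single-diagonal paths (diagonal count $1$, performing one removal and one creation in a single flip).

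\textbf{Lower bound.} For an arbitrary flip sequence I would count diagonal \emph{removal events} ($R$ Type~1 removals plus the $t_3$ Type~3 flips) and \emph{creation events} ($A$ Type~1 additions plus the $t_3$ Type~3 flips). Every diagonal of $P_{in}$ that is not preserved for the whole sequence is removed at least once, and every diagonal of $P_{tar}$ that is not preserved is created at least once. The central structural claim is that at most $m$ diagonals can be preserved simultaneously, and that any preserved set must consist of good happy diagonals inside a single maximal happy subpath: Lemma~\ref{obs2}(a) forces every bad happy diagonal to be flipped, while the visibility argument behind Lemma~\ref{obs1} (a diagonal is removable by a Type~1 flip only when it is the first diagonal seen from a current endpoint) shows that a preserved interior diagonal blocks the removal of any unwanted diagonal lying between two preserved blocks, so only one contiguous block can survive. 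This gives at least $(k-m)+(l-m)$ removal and creation events; converting events to flips and using that each Type~3 flip merges one removal with one creation, the estimate becomes $k+l-2m-t_3$ together with any forced Type~2 flips, so $t_3$ must be controlled separately.

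\textbf{Construction.} For Case~1 I would apply Lemma~\ref{obs2}(b) directly: peel the $k-m$ non-preserved diagonals off $P_{in}$ and the $l-m$ non-preserved diagonals off $P_{tar}$, reaching a common intermediate path $P'$, for a total of $k+l-2m$ flips. For Case~2, where $m=0$ and nothing can be preserved, I would argue as follows. In Case~2a, peel $P_{in}$ and $P_{tar}$ down to the compatible single diagonals $d_1$ and $d_2$ using $(k-1)+(l-1)$ Type~1 flips and splice in the single Type~3 exchange, giving $k+l-1$. In Cases~2b and 2c, peel all diagonals off both paths to reach all-hull paths; here I would use that any two all-hull paths are joined by one Type~2 flip (add the closing hull edge, then delete any other hull edge), so the two reachable hull paths either already coincide (Case~2b, cost $k+l$) or are bridged by one extra Type~2 flip (Case~2c, cost $k+l+1$).

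\textbf{Main obstacle.} The hard part is the lower bound in the Type~3 cases: showing that a Type~3 flip is beneficial at most once and only in the configuration of Case~2a, so that $t_3$ cannot be exploited to beat $k+l-1$ there, cannot beat $k+l$ in Cases~2b and 2c, and helps not at all in Case~1. Since a Type~3 flip can only be executed from a single-diagonal state, using it repeatedly forces the path down to one diagonal several times, which I expect to show is never cheaper than plain peeling. Making this rigorous, together with establishing that a Type~2 flip is genuinely unavoidable in Case~2c (by characterizing exactly which all-hull paths are reachable by peeling, via how each Type~1 removal moves the active endpoint), is where the bulk of the work lies.
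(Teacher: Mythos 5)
Your plan reproduces the paper's constructive direction faithfully (peel via Lemma~\ref{obs2}(b) in Case~1; peel to single diagonals plus one Type~3 flip in Case~2a; peel to all-hull paths plus at most one Type~2 flip in Cases~2b/2c), and your single-block preservation argument for Case~1 is essentially the paper's component decomposition in different clothing. But there is a genuine gap: the two steps you explicitly defer --- controlling $t_3$ and forcing the extra Type~2 flip in Case~2c --- are not technical polish but the actual content of the lower bounds, and your event-counting framework cannot close them on its own. Note in particular that with $t_2=t_3=0$ your count gives exactly $k+l$ in both Case~2b and Case~2c, so no refinement of the removal/creation bookkeeping can separate these two cases: what must be ruled out is an \emph{interleaved} sequence that adds diagonals of $P_{tar}$ before all diagonals of $P_{in}$ are removed and the hull is realigned, thereby never paying for a Type~2 flip. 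Your proposed route for 2c (characterizing which all-hull paths are reachable by peeling, which is the content of Remark~\ref{case2b}) implicitly assumes the optimal sequence splits into a removal phase followed by an addition phase --- precisely the assumption that needs justification.

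The paper closes both holes with short structural arguments you would need to supply. For $t_3$: a Type~3 flip requires both involved diagonals to be the unique diagonals of their respective paths with coinciding convex-hull subpaths emanating from opposite vertices (Figure~\ref{Case12}, right), so any sequence containing a Type~3 flip is pinned to the shape ``remove $k-1$ diagonals, exchange, add $l-1$ diagonals''; each further Type~3 flip merely rotates the lone diagonal by one vertex at a cost of one flip, so a single Type~3 flip saves exactly one flip and additional ones never pay off --- this is what makes Case~2a's condition (a single exchangeable pair) the right dividing line. For Case~2c: the paper shows that if the first diagonal $d=(v_i,v_j)\in P_{tar}$ is added before the hull is realigned, then since no common gap exists and no hull edge of $P_{in}$ has yet been removed (all prior flips were diagonal removals, which by Lemma~\ref{obs1} only add hull edges), the endpoint $v_i$ is incident to a hull edge of $P_{tar}\setminus P_{in}$ and one of $P_{in}\setminus P_{tar}$, forcing the edges adjacent to $d$ to emanate to different sides in the current path and in $P_{tar}$; hence $d$ is a bad happy diagonal and must be removed again by Lemma~\ref{obs2}(a), incurring extra events. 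This degree/emanation argument is the missing ingredient that rules out interleaving, and without it (or an equivalent) your proposal establishes the upper bounds in all four cases but the lower bound only in Case~1.
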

	
	\begin{remark}\label{case2b}
		Regarding Case 2b: If a path $P$ does not contain the convex hull edge $(v_i,v_{i+1})$, we say that $P$ has a \emph{gap} $g$ in the convex hull at $(v_i,v_{i+1})$. If $g$ is a gap of both paths,~$P_{in}$ and $P_{tar}$, we say that~$P_{in}$ and $P_{tar}$ have a \emph{common gap} in the convex hull at $g$. Observe that by removing diagonals we can flip a path $P$ into a path that contains all convex hull edges except for the gap $g$ if and only if $g$ is a gap of $P$. This can be done by performing flips that add convex hull edges incident to each end vertex while removing diagonals until the next added convex hull edge would lie in $g$. For an intuitive example see Figure~\ref{Remark_4}.
	\end{remark}
	
	\begin{figure}[ht]
		\centering
		\includegraphics[width=0.9\textwidth]{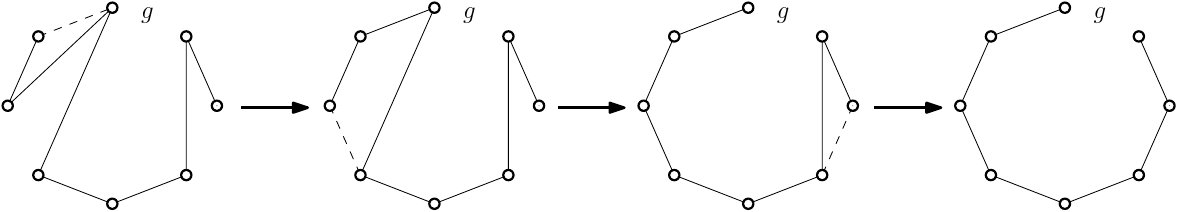}
		\caption{Illustration of Remark \ref{case2b}. In every step the convex hull edge to be added is indicated via a dashed line. First diagonals to the left of the gap $g$ are removed by the left end vertex of the path. Afterwards, the diagonal to the right gets removed by the right end vertex of the path.}
		\label{Remark_4}
	\end{figure}
	
	\begin{proof}[Proof of Theorem \ref{characterization}]
		For a visualization of the different cases, see Figure \ref{Vis}.
		
		\textbf{Case 1:} Consider a decomposition of the point set where points belong to the same component if and only if they lie on the same side of every good happy diagonal. Then, there are exactly two components that each contain one of the endpoints of the path. Therefore, to remove and add edges in components that do not contain endpoints, we need to remove all good happy diagonals on one side of this component. From this, we conclude that we need to remove the good happy diagonals from all but one subpath of happy edges. Let $m_R$ denote the number of good happy diagonals in a subpath of happy edges $R$.
		
		By Lemma \ref{obs1} and Lemma \ref{obs2}, removing all diagonals apart from the ones in $R$ takes at least~$k-m_R$ flips and can be done in that number of flips. Similarly, adding all the new diagonals takes $l-m_R$ flips. Therefore, there is a flip sequence from~$P_{in}$ to $P_{tar}$ that preserves~$R$ with~$k~+~l~-2m_R$ flips.~The length of the minimum flip sequence is therefore~$k+l-2m$ by the choice of $m$ and it is attained by applying the flips from the proof of Lemma \ref{obs2}. % with $S$ chosen in a way that maximizes $m_S$.
		
		%\textbf{Case 2:} By Observation \ref{flipsofdiags}, going from the initial path to the target path can always be done in at most $k+l+1$ steps. Let $P_{in} = P_0$, $P_{in}$,\ldots,$P$ and $P_{tar} = P'_0$, $P'_1$,\ldots,$P'$ be a flip sequence that removes diagonals one by one starting from $P_{in}$ and $P_{tar}$ respectively, taking $k$ and $l$ flips respectively. $P$ and $P'$ can either coincide or differ by a Type 2 flip. In the latter case an additional step is needed, yielding the upper bound of $k+l+1$.
		
		\textbf{Case 2a:} The only way to exchange more than one diagonal at once is by performing a Type 3 flip. If we want to exchange one diagonal from $P_{in}$ directly with a diagonal in $P_{tar}$, all the flips leading up to the Type~3 flip need to remove the $k-1$ diagonals (that are not involved in the Type~3 flip) from the initial path. Similarly, all the flips that occur after the Type~3 flip add the $l-1$ diagonals of the target path that are not involved in the Type~3 flip. Further, the two subpaths of convex hull edges need to coincide between the paths before and after the Type~3 flip, see Figure \ref{Case12} (right). Therefore, the subpaths of convex hull edges are already correctly aligned after the Type 3 flip. This shows that $k+l-1$ flips are necessary and sufficient in case a Type 3 flip can occur: $k-1$ flips for removing diagonals from~$P_{in}$, one Type 3 flip and $l-1$ flips to add diagonals to get $P_{tar}$.
		
		\textbf{Case 2b:} If no Type 3 flip can be set up, it follows from Lemma~\ref{obs1} that $k$ flips are necessary and sufficient to remove all diagonals from the initial path, and similarly $l$ flips to add all the diagonals of the new path. Since the diagonals can be removed in a way that there is a common gap in the convex hull, $k+l$ flips are indeed necessary and sufficient.
		
		\textbf{Case 2c:} If neither Case 2a nor Case 2b hold, so in particular $P_{in}$ and $P_{tar}$ do not have a common gap in the convex hull, it is indeed necessary to remove all diagonals, realign the convex hull and add all diagonals to get the new path. So $k+l+1$ flips are necessary. 
		
		%To show optimality assume we could start adding diagonals from the target path before removing all diagonals from the initial path and realigning the path along the convex hull. We first add the diagonal $d=(v_i,v_j)$. This diagonal splits the convex point set into two sides $A$ and $B$, where $A$ is the side that contains the edge that is incident to $v_i$, and analogously for $B$ and $v_j$. Since the initial and target paths do not share any gaps on the convex hull and we cannot remove convex hull edges while simultaneously removing diagonals, there has to exist a convex hull edge in the target path from $v_i$ to its immediate neighbor on the convex hull in $B$. This edge cannot be in the initial path, as otherwise vertex $v_i$ would have degree 3, but it has to be in one of the two paths. Therefore, the edge is in the target path. Similarly, we argue for the convex hull edge between $v_j$ and its immediate neighbor on the convex hull in $A$. But then, the edges are oriented in a way, such that $d$ forms a bad happy diagonal. Therefore, by Lemma \ref{obs2}, it has to be removed during the further flip sequence and, consequently, no flips are saved by adding this diagonal at that point of the flip sequence. \qed
		
		To show optimality also for Case 2c assume we could start adding diagonals from the target path before removing all diagonals from the initial path and realigning the path along the convex hull. We add the first diagonal $d=(v_i,v_j) \in P_{tar}$ in the step from $P_l$ to $P_{l+1}$. Assume $v_i$ was the end vertex in the previous step. By the assumption of Case 2c, $v_i$ is incident to two convex hull edges in~$P_{in} \cup P_{tar}$. One of them, say $e_1$, is in $ P_{tar}\setminus P_{in}$, otherwise there would not be an isolated vertex at $v_i$. Since $v_i$ has degree at most two in $P_{tar}$ the other convex hull edge, say $e_2$, has to be in $P_{in}\setminus P_{tar}$. Since we didn't remove any convex hull edge from $P_{in}$ prior to that flip, the edges incident to $v_i$, $e_1$ and $e_2$ emanate to different sides of~$d$ in~$P_{l+1}$ and~$P_{tar}$. Therefore, $d$ is a bad happy diagonal and has to be removed by Lemma \ref{obs2}a.
	\end{proof}
	
	\begin{figure}[ht]
		\centering
		\includegraphics[width=0.68\textwidth]{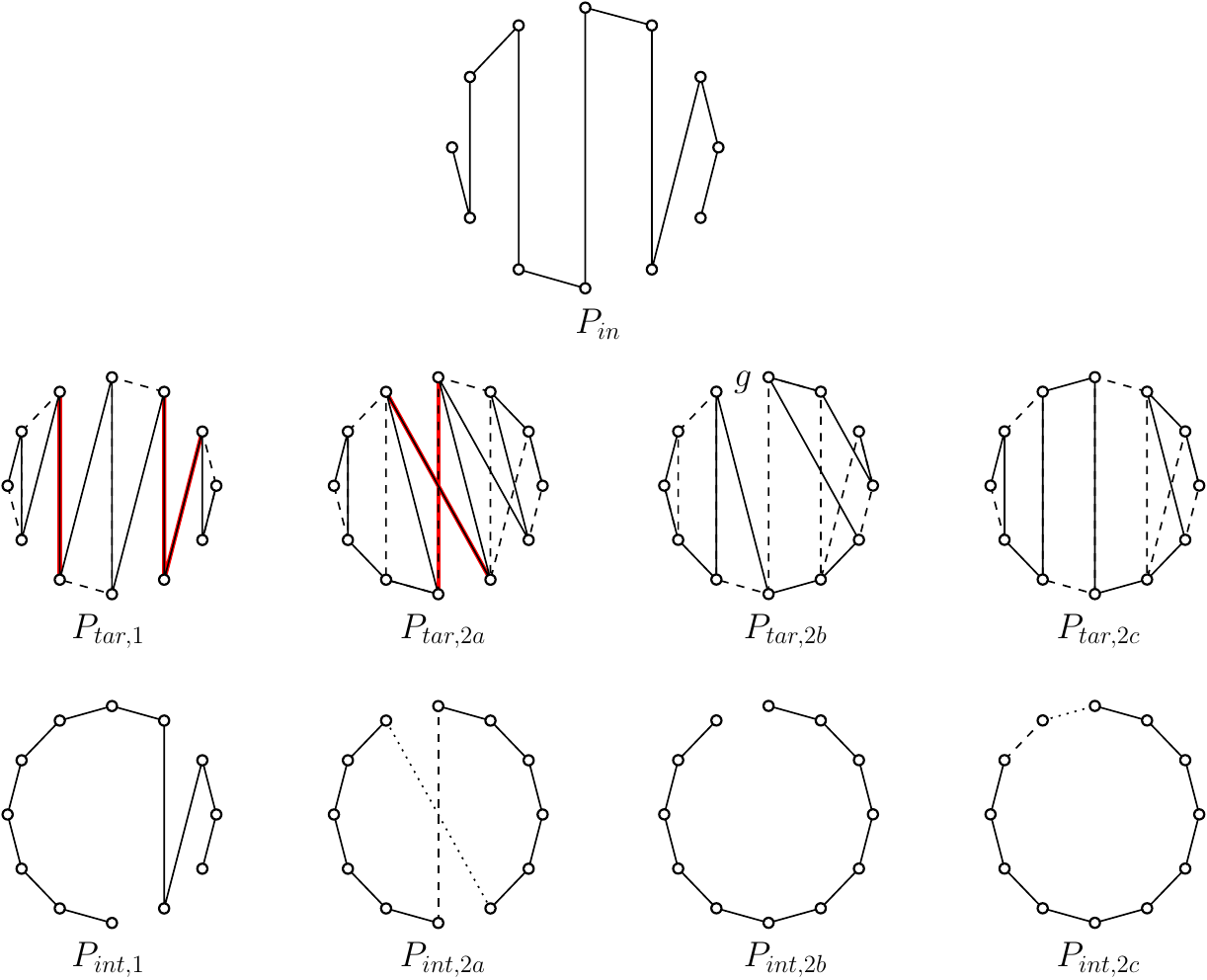}
		\caption{Visualization of Theorem \ref{characterization}: Top row: the initial path $P_{in}$. Second row: target paths where dashed lines hint at the initial path. The cases 1, 2a, 2b and 2c occur from left to right. In Case 1 good happy edges are marked in red. In Case 2a the two edges involved in a Type~3 flip are marked red. For Case 2b we labeled the common gap $g$ in the convex hull. Observe that for Case~2c, none of the other cases occur. Bottom row: intermediate configurations for the corresponding cases.}
		\label{Vis}
	\end{figure}
	
	%Note that this characterization of minimum flip sequences implies that there always exists a minimum flip sequence that only uses edges from $P_{in}$, $P_{tar}$ and the convex hull. Further, happy edges that lie on the convex hull are preserved during a minimum flips sequence.
	
	\section{A Linear Time Approach \label{linear}}
	
	We show that for a given pair of paths the classification into the four categories derived in Theorem \ref{characterization} can be done in linear time.
	During this classification, we collect all the required information so that for each category we can then argue how the flips in a minimum flip sequence can be provided in linear time.
	
	\begin{theorem} \label{in_P}
		The flip distance between two plane spanning paths in convex point sets can be determined in $O(n)$ time and space.
	\end{theorem}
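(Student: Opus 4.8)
The plan is to reduce the computation of the flip distance to evaluating the case distinction of Theorem \ref{characterization}, and to show that each ingredient of that distinction — the diagonal counts $k$ and $l$, the good/bad classification of happy diagonals, the value $m$, the existence of a suitable Type 3 flip, and the existence of a common gap — can be extracted from a constant number of linear scans over the two paths. As a data structure I would store each path by recording, for every vertex $v_i$, its at most two neighbours in $P_{in}$ and in $P_{tar}$; this is built in $O(n)$ time and space. Since the vertices are labelled $v_0,\dots,v_{n-1}$ in convex position, an edge $(v_a,v_b)$ is a convex hull edge exactly when $a-b\equiv\pm 1\pmod n$, so classifying any edge as a hull edge or a diagonal is $O(1)$. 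One scan over the edges of each path therefore yields $k$ and $l$ and records the diagonals of each path.

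Next I would identify the happy edges: $(v_a,v_b)$ is happy iff $v_b$ is a neighbour of $v_a$ in both paths, which one scan over all vertices detects in $O(n)$ since degrees are at most two. Because a happy edge lies in both paths and each vertex keeps the same incident happy edges in $P_{in}$ and $P_{tar}$, the happy edges decompose into the same maximal subpaths of consecutive happy edges in both paths, and these can be extracted by a single traversal. For a happy diagonal $d=(v_i,v_j)$, the good/bad classification only compares, in $O(1)$, the side into which the edge incident to $v_i$ emanates in $P_{in}$ versus $P_{tar}$, where the side is decided by whether the other endpoint lies in the arc $\{v_{i+1},\dots,v_{j-1}\}$ or in its complement — an index comparison. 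If any good happy diagonal exists we are in Case 1; recalling that every happy diagonal inside a happy subpath of length at least $2$ is automatically good, the value $m$ is simply the maximum, over all maximal happy subpaths, of the number of diagonals it contains, which I accumulate during the same traversal. This yields $k+l-2m$ in $O(n)$.

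Otherwise no good happy diagonal exists and we must decide among Cases 2a, 2b and 2c. For the common gap test (Case 2b) I would mark, in a boolean array of length $n$, the hull edges absent from $P_{in}$ and, in a second array, those absent from $P_{tar}$; a common gap exists iff some index is marked in both, which is an $O(n)$ check. If Case 2a fails and a common gap exists we return $k+l$, and if both fail we are in Case 2c and return $k+l+1$. Since all other ingredients are already in hand, the only remaining work is the test for Case 2a.

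This test is the main obstacle: deciding whether some $d_1\in P_{in}$ and $d_2\in P_{tar}$ can \emph{eventually} be exchanged by a Type 3 flip. Here I would exploit the rigid set-up of a Type 3 flip established after Lemma \ref{obs1}: it can occur only between two single-diagonal paths whose convex hull edges coincide but whose diagonals emanate from opposite vertices, so $d_1$ and $d_2$ must be crossing diagonals differing by a one-vertex rotation, with their surrounding hull arcs forced to match. Since, as noted around Figure \ref{intuition}, a path is fixed by its diagonals together with their emanating edges, for each diagonal of $P_{in}$ taken as the candidate survivor $d_1$ there are only $O(1)$ admissible partners $d_2$; I would check whether such a $d_2$ is a diagonal of $P_{tar}$ and whether the induced hull configurations are compatible, giving $O(n)$ total work. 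The delicate point — which I expect to be the hardest to justify rigorously — is that this purely local check is both necessary and sufficient for eventual exchangeability, i.e.\ that the diagonals other than $d_1,d_2$ can always be removed (as in Lemma \ref{obs1} and Remark \ref{case2b}) without destroying the Type 3 set-up. Granting this, returning the value prescribed by Theorem \ref{characterization} is immediate, and the whole procedure runs in $O(n)$ time and space.
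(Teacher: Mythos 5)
Your proposal is correct and follows essentially the same route as the paper's proof: the same neighbour arrays per vertex, constant-time index tests for hull edges versus diagonals, a linear scan classifying happy diagonals as good or bad to compute $m$, boolean-array marking of hull gaps for Case 2b, and the reduction of the Case 2a test to $O(1)$ rotated-partner candidates per diagonal of $P_{in}$. The point you flag as delicate --- that the local one-vertex-rotation check characterizes eventual exchangeability by a Type 3 flip --- is handled in the paper exactly as you suggest, via the uniqueness of a single-diagonal path given the diagonal and its emanation side (Figure \ref{intuition}) together with the necessity argument already established in Case 2a of the proof of Theorem \ref{characterization}.
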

	
	\begin{proof}
		Recall that we label the vertices $v_0,\ldots,v_{n-1}$ in clockwise order, beginning at an arbitrary but fixed vertex. For both given paths, we create an array of size $n$, say $I[0,\ldots,n~-~1~]$ for the initial path, and $T[0,\ldots,n~-~1~]$ for the target path. For each array an entry at index $i$ contains the indices of the two (predecessor and successor, respectively) vertices the vertex $v_i$ is connected to in the respective path. For the end vertices of the two paths we only store one index each. These arrays can be initialized in linear time by traversing the two path and storing the indices. At the same time we compute the numbers $k$ and $l$ of diagonals in the initial and target path. This can simply be done by the observation that the vertices of diagonals do have non-consecutive indices, where the indices are taken modulo $n$.
		
		To see if \textbf{Case 1} occurs we check for the existence of good happy diagonals and also compute the number of good happy diagonals that appear in a consecutive sequence of happy edges. For this, traverse the edges of the initial path from one end to the other. For an edge $e=(v_i,v_j)$ where w.l.o.g.\ $i<j$ we check in constant time if it is happy by checking if $j$ is contained in the two indices of $T[i]$. If this is the case we have a happy diagonal if $i$ and $j$ are not consecutive, taken modulo $n$. For a happy diagonal $(v_i,v_j)$ we then check if it is good. Let $v_k$ and $v_{k'}$ be the successors of $v_j$ in the initial and target path, respectively. The indices~$k$ and $k'$ can be obtained in constant time from $I[j]$ and $T[j]$. If $i<k<j$ and $i<k'<j$ is either both true or both false we have a good happy diagonal, otherwise a bad happy diagonal. While traversing all edges we also maintain a counter $m$, a start index, and an end index to keep track of the subpath $R$ of consecutive happy edges with the highest number of good happy diagonals. If at the end of the process we have $m>0$, then the minimum flip sequence has length $l+k-2m$. The flip sequence can be obtained by first removing all diagonals from the initial path except the ones in $R$ and then adding all missing diagonals from the target path.
		
		However, if $m = 0$, then we need to check for further cases. We continue by checking, whether the diagonals can be removed in such a way, that they permit a Type 3 flip, that is, whether we have \textbf{Case 2a}.
		For this, again traverse the edges of the initial path. For every diagonal $(v_i,v_j)$ where w.l.o.g. $i<j$ check the following: If $v_i$ is adjacent to a vertex $v_k$ with $i<k<j$ in the initial path, then there has to be the edge $(v_{i-1},v_{j+1})$ (indices taken modulo~$n$) in the target path and $v_{i-1}$ has to be adjacent to a vertex $v_{k'}$ with $j+1 < k'$ or $k'<i-1$. Checking whether $(v_i,v_j)$ is a diagonal can be done in constant time for every pair of indices $i$ and $j$ by checking if the indices are non-consecutive. Checking the existence of the shifted diagonal in the target path can also be done in constant time by accessing the entry~$T[i+1]$ and checking whether $v_{j-1}$ appears in the target path as one of the two neighbors of~$v_{i+1}$. The index $k'$ happens to be the entry in $T[i+1]$ that differs from $j-1$. We conclude that all the checks for a given diagonal can be executed in constant time. Conversely, if $v_i$ is adjacent to a vertex $v_k$ with $j<k$ or $k<i$ in the initial path, then there has to be the edge $(v_{i+1},v_{j-1})$ (indices taken modulo n) in the target path and~$v_{i+1}$ has to be adjacent to a vertex $v_{k'}$ with $i+1 < k'<j-1$. As before, all checks can be done in constant time for a fixed given diagonal. If a pair of edges $(v_i,v_j)$ and $(v_{i+1},v_{j-1})$ ($v_{i-1},v_{j+1}$ respectively) that permits a Type 3 flip exists, remove all $k-1$ diagonals except for the diagonal $(v_i,v_j)$, then perform a Type 3 flip to flip in $(v_{i+1},v_{j-1})$ or $(v_{i-1},v_{j+1})$, respectively. Finally continue by adding all the remaining $l-1$ diagonals to obtain a flip sequence of length $k+l-1$,
		
		If no pair of edges for a Type 3 flip exists, we next check, whether the edges can be removed in a way such that they permit a common gap on the convex hull after removing all diagonals, that is, whether \textbf{Case 2b} occurs. For this, create an additional array~$C[0,\ldots,~n~-~1~]$ where the entry $i$ represents the convex hull edge $(v_i,v_{i+1})$ (entries taken modulo $n$).  Each entry will be either marked or not. In the beginning, every entry is not marked. We first traverse the initial path and mark all entries of the array where the corresponding convex hull edge is part of the initial path. This can again be done by checking whether consecutive vertices along the path are described by consecutive numbers modulo $n$ and then marking the entry of~$C$ that corresponds to the lower of the two indices (with the exception of the edge $(v_{n-1},v_0)$ where we mark $C[n-1]$). We repeat the process for the target path. Afterwards, we traverse the array $C$ and check, if we can find an unmarked entry. An unmarked entry corresponds to a convex hull edge $e$ that does not lie in either, the initial path and the target path. By Remark \ref{case2b}, $e$ does not have to be flipped in, when removing and adding the diagonals in the right order. This can be done by removing diagonals on both ends of a path until the next edge to be flipped in would be $e$. After all~$k$ diagonals are removed, we start flipping in the $l$ diagonals to obtain the target path. The constructed flip sequence has length $k+l$.
		
		If none of the cases above occur the only remaining case is \textbf{Case 2c}. For a minimum flip sequence just flip all diagonals to the convex hull from either side in both the initial and target path to obtain two paths entirely on the convex hull. Then, the final flip sequence consists of going from the initial path to the first path on the convex hull. Afterwards, we perform a Type 2 flip to obtain the other path on the convex hull and then we reverse the flip sequence from the target path to said convex hull path, to get to the target path. The obtained flip sequence has length $k+l+1$.
	\end{proof}
	
	We note that to check every case we only traverse every path once and for every edge we perform checks that can be done in constant time. This yields a linear time algorithm for finding the minimum flip sequence between pairs of plane spanning paths in convex point sets. In our description, the paths are traversed up to four times, once for initialization and once for each case that checks the existence of certain structures. It would be possible to perform the latter three checks within one iteration, but since it doesn't affect the asymptotic runtime we opted for the description that is easier to follow.

	\section{Variants of the Problem}
	
	In the following, we describe variants of our initial problem of finding the flip distance between plane spanning paths in convex point sets. 
	
	First, we consider a variation of the problem in which we only allow flips for which the removed edge does not cross the added edge, that is, the union of the two paths that are part of the flip is crossing free. With our notion of flips, this means that we restrict the set of flips to flips of Type 1 and Type 2. We call the length of a minimum flip sequence in this setting the \emph{compatible flip distance}. Restricting existing flips to the case where the initial and resulting graphs are compatible has also been of interest for plane spanning trees~\cite{NICHOLS2020111929} or crossing-free perfect matchings~\cite{articlematchings_2}.
	
	Afterwards, we consider a local variant of the problem where we require the removed edge and the added edge to share a vertex. In this setting Type 1 flips still work as before. Type~2 flips, however, can only be simulated by repeatedly adding an edge and removing an edge that follows immediately afterwards. This can be interpreted as performing Type~2 flips at a higher cost, which is the distance between the added and the removed edge in the path. In this setting we call a valid flip a \emph{local flip} and the length of a minimum flip sequence the \emph{local flip distance}. The setting, where the added and removed edge have to share a vertex is known as edge rotation in the setting of plane spanning trees~\cite{NICHOLS2020111929}.
	
	As a last variant we consider paths in simple polygons. A path in a simple polygon can have two types of edges. The first type are edges that are boundary edges of the polygon. The second type lie in the interior of the polygon.
	
	For all three variants we describe how to modify our algorithm to obtain the flip distance in linear time. Especially for paths in simple polygons this might be surprising as we recall that the flip distance problem for triangulations of simple polygons is NP-complete \cite{Aichholzer2015}. 
	
	\subsection{Compatible Flip Distance in Convex Point Sets}
	
	We remark that for the compatible version of finding minimum flip sequences between plane spanning paths in convex point sets the only difference to the original setting is that we do not allow Type 3 flips. Therefore, the proof of Theorem \ref{compatibleflips} below is the same as for Theorems \ref{characterization} and \ref{in_P} except that we skip the original Case 2a. The notion of good and bad happy diagonals stay the same as for the original setting.
	
	\begin{theorem}\label{compatibleflips}
		Let $P_{in}$ and $P_{tar}$ be two plane spanning paths for the same convex point set. Let $k$ and $l$ denote the number of diagonals of $P_{in}$ and $P_{tar}$, respectively. Then the compatible flip distance between $P_{in}$ and $P_{tar}$ is described by the following cases.
		\begin{itemize}
			\item \textbf{Case 1:} If good happy diagonals exist, let $m \geq 1$ be the maximal number of good happy diagonals in a subpath of consecutive happy edges. The compatible flip distance is then $k+l-2m$.
			\item \textbf{Case 2:} If no good happy diagonal exists, then we distinguish two cases
			\subitem \textbf{Case 2a:} If the diagonals can be flipped to convex hull edges in both, the initial and target path, such that the paths after flipping all diagonals coincide, the compatible flip distance is $k+l$.
			\subitem \textbf{Case 2b:} Otherwise the compatible flip distance is $k+l+1$.
		\end{itemize}
		All the checks for the case distinction above can be implemented to run in total time and memory that is linear in the instance size.
	\end{theorem}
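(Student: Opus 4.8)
The plan is to reuse the proofs of Theorem~\ref{characterization} and Theorem~\ref{in_P} almost verbatim, observing that the only place a Type~3 flip was ever used is the original Case~2a, while every other lower bound is stated for \emph{arbitrary} flip sequences and every other upper-bound construction happens to use only Type~1 and Type~2 flips. Since compatible flip sequences form a subset of all flip sequences, the compatible flip distance is always at least the general flip distance, so the general lower bounds transfer; and since the surviving constructions are already compatible, their upper bounds transfer as well. The real content is therefore to supply \emph{direct} lower bounds for the two surviving Case~2 subcases, because an instance that was Type-3-eligible has general flip distance $k+l-1$, which is too weak to conclude the compatible values $k+l$ and $k+l+1$.

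First I would record two structural facts. By Lemma~\ref{obs1}, every Type~1 flip changes the number of diagonals by exactly one, either removing or adding a single diagonal. Moreover, a Type~2 flip can only occur when the current path lies entirely on the convex hull, so it touches no diagonal and merely moves the single hull gap; in particular the diagonal-removing and diagonal-adding operations of a compatible sequence are all distinct Type~1 flips, and no flip does both. For Case~1 I would then argue exactly as in Theorem~\ref{characterization}: the bound $k+l-2m$ holds for all flip sequences, hence for compatible ones, and the flip sequence of Lemma~\ref{obs2}(b) preserving the subpath $R$ uses only Type~1 flips, giving a matching compatible upper bound.

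For the compatible Case~2a I would give the lower bound directly: since no good happy diagonal exists, by Lemma~\ref{obs2}(a) every happy diagonal is bad and must be removed, so all $k$ diagonals of $P_{in}$ and all $l$ diagonals of $P_{tar}$ are removed, respectively added, by distinct Type~1 flips, forcing at least $k+l$ flips; the construction of Remark~\ref{case2b} using the common gap realizes this with Type~1 flips only. For the compatible Case~2b I would combine the same $k+l$ bound with the realignment argument of the original Case~2c: because there is no common gap, after removing all initial diagonals the hull gap inherited from $P_{in}$ differs from every gap of $P_{tar}$, and adding any target diagonal before realigning would create a bad happy diagonal as in that argument; the only compatible way to move the gap is a Type~2 flip, forcing one extra flip and hence at least $k+l+1$. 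The matching upper bound is the original Case~2c construction, which uses a single Type~2 flip and otherwise Type~1 flips.

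Finally, the linear-time claim follows from the algorithm of Theorem~\ref{in_P} with the Type~3 test deleted: one pass computes $k$ and $l$ and detects good happy diagonals together with the maximal count $m$, and a second pass over the convex-hull-edge array detects a common gap, distinguishing Case~2a from Case~2b. I expect the main obstacle to be precisely the lower bounds for Case~2a and Case~2b, since they cannot be inherited from the general theorem and must instead be derived from the exactly-one-diagonal-per-Type~1-flip property together with the forced Type~2 realignment; verifying that Type~2 flips genuinely leave diagonals untouched is the delicate point on which the separation of the two subcases rests.
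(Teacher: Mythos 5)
Your proposal is correct and follows essentially the same route as the paper, whose proof of Theorem~\ref{compatibleflips} consists of rerunning the proofs of Theorems~\ref{characterization} and~\ref{in_P} with the original Case~2a deleted. The one point you elaborate beyond the paper --- that for instances where a Type~3 flip could geometrically be set up the compatible lower bounds $k+l$ and $k+l+1$ cannot be inherited from the general theorem (whose distance there is only $k+l-1$) and must be argued directly via the fact that every compatible flip touches at most one diagonal (Lemma~\ref{obs1}, with Type~2 flips touching none) together with the forced Type~2 realignment of the original Case~2c --- is exactly the detail the paper's terse proof leaves implicit, and your direct argument fills it correctly.
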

	
	\subsection{Local Flip Distance in Convex Point Sets}
	
	If we restrict the flips to the local variant where the removed an added edge have to share a vertex, we observe again that Type 3 flips can no longer occur. Further, re-aligning the convex hull can no longer be done in a single flip if the two gaps are too far apart. We partially resolve the second issue with the following lemma:
	
	\begin{lemma}\label{type2local}
		Any Type 2 flip between two paths in convex point sets can be simulated using at most $2$ local flips.
	\end{lemma}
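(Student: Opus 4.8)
The plan is to exploit the fact, recalled at the start of Section~\ref{basics}, that a Type 2 flip can only be performed when every edge of the current path lies on the convex hull. Writing the path as $P = p_1, p_2, \ldots, p_n$ in order of traversal, such a path consists of $n-1$ consecutive hull edges, and the Type 2 flip adds the single missing hull edge $(p_1,p_n)$ and deletes some edge $(p_{i-1},p_i)$. If the deleted edge is incident to an endpoint of $P$, that is $i \in \{2,n\}$, then $(p_{i-1},p_i)$ already shares a vertex with $(p_1,p_n)$, so the flip is itself local and a single local flip suffices. The interesting range is therefore $3 \le i \le n-1$, where $p_{i-1}$ and $p_i$ are interior vertices of $P$ and $(p_{i-1},p_n)$ is a genuine diagonal.

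For this range I would route the Type 2 flip through one intermediate path $P'$ using two Type 1 flips. Note first that every Type 1 flip is automatically local: by definition it deletes $(p_{i-1},p_i)$ and inserts either $(p_1,p_i)$ or $(p_{i-1},p_n)$, which shares the vertex $p_i$, respectively $p_{i-1}$, with the deleted edge. Concretely, the first flip deletes $(p_{i-1},p_i)$ and inserts the diagonal $(p_{i-1},p_n)$ (a Type 1 flip using the endpoint $p_n$), producing $P' = p_1,\ldots,p_{i-1},p_n,p_{n-1},\ldots,p_i$; the second flip deletes $(p_{i-1},p_n)$ and inserts $(p_1,p_n)$ (a Type 1 flip using the endpoint $p_1$). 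The two deleted/inserted pairs share $p_{i-1}$ and $p_n$, respectively, so both flips are local, and the net effect is to delete $(p_{i-1},p_i)$ and add $(p_1,p_n)$, i.e.\ exactly the original Type 2 flip.

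What remains is to verify that each of the two steps is a legal flip, that is, that the intermediate structures are plane spanning paths and that the inserted edges cross nothing. Since every edge other than the one being manipulated is a hull edge, and a chord of a convex polygon never crosses a hull edge, planarity is immediate at each step. One then checks by degree bookkeeping that $P'$ is a single path with endpoints $p_1$ and $p_i$, and that after the second flip the resulting edge set is precisely the hull cycle minus $(p_{i-1},p_i)$, which is the target configuration of the Type 2 flip. The main obstacle is not the crossing argument but rather keeping the cyclic indexing and the identification of endpoints consistent across the two flips; once this is set up, both steps follow directly. Combining this with the already-local case $i\in\{2,n\}$ yields the claimed bound of at most two local flips.
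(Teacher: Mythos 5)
Your proof is correct and follows essentially the same route as the paper: both handle the case where the removed edge shares a vertex with $(p_1,p_n)$ by a single local flip, and otherwise simulate the Type~2 flip by two Type~1 (hence local) flips through one intermediate path containing a single diagonal. Your choice of intermediate diagonal $(p_{i-1},p_n)$ is just the mirror image of the paper's $(p_1,p_i)$, so the constructions coincide up to symmetry.
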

	
	\begin{figure}[ht]
		\centering
		\includegraphics[width=0.9\textwidth]{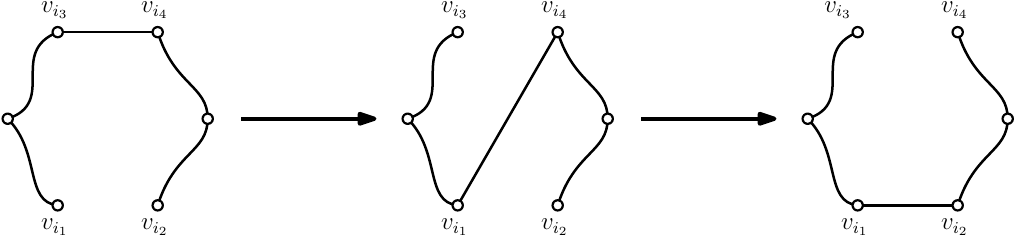}
		\caption{Replacing a Type 2 flip with two local flips}
		\label{Type2local}
	\end{figure}
	
	\begin{proof}
		Let $P_{in} \neq P_{tar}$ be two paths on a convex point set such that all their edges lie on the convex hull. Let $g_1$ and $g_2$ be their respective gaps with $g_1\neq g_2$.
		
		If $g_1$ and $g_2$ share a vertex, then the flip from $P_{in}$ to $P_{tar}$ is a local flip.
		
		Else, if $g_1$ and $g_2$ do not share a vertex, let $v_{i_1}$ and $v_{i_2}$ be the vertices of $g_1$. Further, let $v_{i_3}$ and $v_{i_4}$ be the vertices of $g_2$ such that $v_{i_3}$ occurs first when walking along $P_{in}$ in the direction from $v_{i_1}$ to $v_{i_2}$.
		
		We can flip from $P_{in}$ to $P_{tar}$ via the following two flips: (1) Add $(v_{i_1},v_{i_4})$, remove $(v_{i_3},v_{i_4})$ and (2) add $(v_{i_1},v_{i_2})$, remove $(v_{i_1},v_{i_4})$.
		See Figure \ref{Type2local} for an illustration.
	\end{proof}
	
	We conclude that the proof of Theorem \ref{localflips} below works similar to for Theorem \ref{characterization} and~\ref{in_P} for most of the cases. Again, the original Case 2a has no counterpart in this setting as Type~3 flips are prohibited. Further, once we are forced to perform flips that realign the convex hull, we need to distinguish further, whether this happens in one or two flips. The checks that are needed for this case distinction can still be done in linear time and space. When checking for mutual gaps in the convex hull we not only check whether the two paths have a common gap, but also whether they have adjacent gaps. We revisit our array $C[0,\ldots,n-1]$, where entry $i$ represents the edge~$(v_i,v_{i+1})$ (taken modulo $n$). Now the entries of $C$ are no longer marked or unmarked, but $C$ will contain integers from $0$ to~$3$. In the beginning, we set every value of $C$ to $0$. We traverse the initial path. Every time we encounter a convex hull edge~$(v_i,v_{i+1})$ along the path, we increase entry $i$ in $C$ by $1$. Then we traverse the target path and increase entry $i$ by $2$, if we encounter the convex hull edge~$(v_i,v_{i+1})$. Finally, we traverse $C$ again. An entry $0$ of $C$ will correspond to a common gap in the convex hull. Values $1$ and $2$ that appear consecutively in some order correspond to gaps that share a single vertex. We traverse the paths and the array a constant number of times, performing constant time checks and updates along the way. This yields a linear time algorithm to check all the conditions for the characterization in Theorem \ref{localflips}.
	
	We derive the following characterization of minimum flip sequences:
	
	\begin{theorem}\label{localflips}
		Let $P_{in}$ and $P_{tar}$ be two plane spanning paths for the same convex point set. Let $k$ and $l$ denote the number of diagonals of $P_{in}$ and $P_{tar}$, respectively. Then the local flip distance between $P_{in}$ and $P_{tar}$ is described by the following cases.
		\begin{itemize}
			\item \textbf{Case 1:} If good happy diagonals exist, let $m \geq 1$ be the maximal number of good happy diagonals in a subpath of consecutive happy edges. The local flip distance is then $k+l-2m$.
			\item \textbf{Case 2:} If no good happy diagonal exists, then we distinguish three cases
			\subitem \textbf{Case 2a:} If the diagonals can be flipped to convex hull edges in both, the initial and target path, such that the paths after flipping all diagonals coincide, the local flip distance is $k+l$.
			\subitem \textbf{Case 2b:} Otherwise if the diagonals can be removed such that the gaps in the convex hull share a vertex, then the local flip distance is $k+l+1$.
			\subitem \textbf{Case 2c:} Otherwise, the local flip distance is $k+l+2$.
		\end{itemize}
		All the checks for the case distinction above can be implemented to run in total time and memory that is linear in the instance size.
	\end{theorem}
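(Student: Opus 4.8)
The plan is to mirror the structure of the proofs of Theorem \ref{characterization} and Theorem \ref{in_P}, reusing everything that does not involve re-aligning the convex hull, and to concentrate the new work on the single place where the local restriction actually changes the count, namely the final realignment step. First I would observe that Lemma \ref{obs1} and Lemma \ref{obs2} are unaffected by the local restriction: every Type 1 flip already has its removed and added edge sharing the end vertex $v$, so it is automatically a local flip. Consequently the whole Case 1 analysis carries over verbatim, giving local flip distance $k+l-2m$ whenever a good happy diagonal exists; and the lower bound argument (that good happy diagonals off one subpath $R$ must be removed, and all target diagonals added) is identical. This disposes of Case 1 and also establishes that, in Case 2, exactly $k$ flips are needed to clear the diagonals of $P_{in}$ and $l$ to install those of $P_{tar}$, independent of locality.

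Next I would treat Case 2a. Since Type 3 flips are forbidden here, by Lemma \ref{obs1} every diagonal must be removed by a local Type 1 flip, and this can always be arranged so that a prescribed gap is preserved (Remark \ref{case2b}). If $P_{in}$ and $P_{tar}$ admit a common gap $g$, then removing all diagonals of $P_{in}$ toward $g$ and all diagonals of $P_{tar}$ toward $g$ yields the \emph{same} convex-hull path with gap $g$, so no realignment flip is needed and $k+l$ local flips suffice; the matching lower bound is exactly the $k+l$ from the paragraph above. The content of Case 2a is therefore unchanged from Case 2b of Theorem \ref{characterization}.

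The genuinely new analysis is the split of the old ``$k+l+1$'' case into the local Cases 2b and 2c, and here the key ingredient is Lemma \ref{type2local}. After clearing all diagonals we are left with two convex-hull paths, $P'$ with gap $g_1$ and $P''$ with gap $g_2$, and we must realign $P'$ to $P''$. If $g_1$ and $g_2$ share a vertex the realignment is a single local flip, giving total $k+l+1$; if they are disjoint, Lemma \ref{type2local} realigns them in exactly two local flips, giving $k+l+2$. For the lower bounds I would argue, as in Case 2c of Theorem \ref{characterization}, that in Case 2b/2c one cannot begin adding a target diagonal before fully clearing and realigning the hull without creating a bad happy diagonal (which Lemma \ref{obs2}(a) forces to be removed again), so the $k$ removals, the realignment, and the $l$ additions cannot be overlapped or shared; thus at least $k+l+1$ flips are needed in both subcases, and at least $k+l+2$ when the gaps are forced disjoint because a single local flip cannot bridge non-adjacent gaps. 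Finally I would note that over all choices of which gaps to produce, one should pick the realignment requiring the fewest local flips, which is why the case distinction is on whether \emph{some} admissible pair of gaps can be made adjacent.

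\textbf{Main obstacle.} I expect the subtle point to be the Case 2c lower bound: ruling out that the two realignment flips could be amortized against the diagonal-removal or diagonal-addition flips. Concretely I must show that no single local flip can simultaneously remove a diagonal and contribute to hull realignment, and that the optimal choice of gaps cannot be improved by a cleverer ordering — that is, that whenever the instance forbids adjacent gaps, every local flip sequence genuinely pays the extra cost. This requires carefully combining the diagonal-count lower bound from Lemma \ref{obs1} with the gap-adjacency obstruction and the bad-happy-diagonal argument of Lemma \ref{obs2}(a), rather than any new computation.
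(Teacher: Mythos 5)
Your plan follows essentially the same route as the paper: Case 1 and the common-gap case carry over from Theorems \ref{characterization} and \ref{in_P} (with the observations that Type 1 flips are automatically local and Type 3 flips impossible), and Lemma \ref{type2local} drives the same split of the realignment cost into one local flip when the gaps share a vertex versus two when they are disjoint, with the lower-bound subtlety you flag handled just as in Case 2c of Theorem \ref{characterization}. The only piece you leave implicit is the linear-time test for whether some gap of $P_{in}$ can be made adjacent to some gap of $P_{tar}$, which the paper implements by a single scan of the array $C$ recording, for each convex hull edge, which of the two paths contains it, and looking for a $0$ entry (common gap) or consecutive entries valued $1$ and $2$ (gaps sharing a vertex).
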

	
	While for specific pairs of paths the flip distance may increase by $1$ when considering the local flip distance, this does not have any impact on the diameter of the flip graph. The reason is that pairs of paths for which the flip distance is maximized need to have many diagonals and consequently many gaps in the convex hull. As soon as $k+l > n$, the paths share a gap in the convex hull by the pidgeonhole principle and thus Cases 2b and 2c of Theorem~\ref{localflips} cannot occur if sufficiently many diagonals are present.
	
	\subsection{Flip Distance in Simple Polygons}
	
	For a given simple polygon, we consider plane spanning paths in the visibility graph of the polygon. The vertex set of the path is the set of vertices of the polygon and the set of edges is described by (1) the edges on the boundary of the polygon, in the following called \emph{boundary edges} and (2) \emph{interior edges} between two non-consecutive vertices of the polygon such that these straight line connection between the two vertices is entirely contained in the interior of the polygon. We observe that interior edges cut the polygon into two parts such that the two parts cannot interact as long as the interior edge exists.
	
	In analogy to the preliminaries in Section \ref{basics} we observe the following behavior of the different types of flips. See Figure \ref{flips_polygons} for an illustration.
	
	\begin{figure}[ht]
		\centering
		\includegraphics[width=0.9\textwidth]{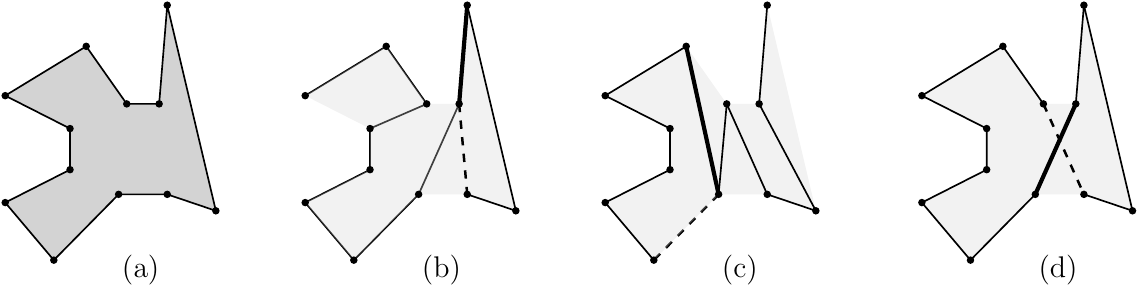}
		\caption{Flips between paths in simple polygons that change interior edges:
			\begin{itemize}
				\item[(a)] The underlying simple polygon
				\item[(b)] A Type 1 flip that adds a boundary edge (fat) and removes an interior edge (dashed)
				\item[(c)] A Type 1 flip that adds an interior edge (fat) and removes a boundary edge (dashed)
				\item[(d)] A Type 3 flip that exchanges two interior edges (fat and dashed)
		\end{itemize}}
		\label{flips_polygons}
	\end{figure}
	
	\begin{itemize}
		\item A Type 1 flip either adds a boundary edge and removes and interior edge or adds an interior edge and removes a boundary edge. Further, if there exist interior edges, we can always perform a Type 1 flip that reduces the number of interior edges.
		\item A Type 2 flip can be performed if and only if all the edges of the path are boundary edges. A Type 2 flip exchanges two paths that are entirely contained in the boundary for one another.
		\item For a Type 3 flip to be possible, for each path there has to be exactly one interior edge and all the other edges have to be on the boundary. In the setting of simple polygons we additionally require that the straight line that connects the two endpoints of a path lies entirely inside the polygon.
	\end{itemize}
	
	Next, we take a look at happy interior edges, that is, interior edges that lie in both the initial path and the target path. We observe the following analogy to Lemma \ref{obs2}. See Figure~\ref{good_bad_polygons} for an illustration.
	
	\begin{itemize}
		\item \emph{Good happy interior edges} are interior edges that split the polygon into two parts such that the subpaths emanating from one vertex of the interior edge emanate into the same part of the polygon in both the initial path and the target path. There exist flip sequences from the initial path to the target paths such that one sequence of consecutive happy edges containing good happy interior edges is not flipped.
		\item \emph{Bad happy interior edges} are interior edges for which their emanating subpaths at each vertex emanate to different sides of the polygon. Every flip sequence from the initial path to the target path has to remove all bad happy interior edges.
	\end{itemize}
	
	\begin{figure}[ht]
		\centering
		\includegraphics[width=0.9\textwidth]{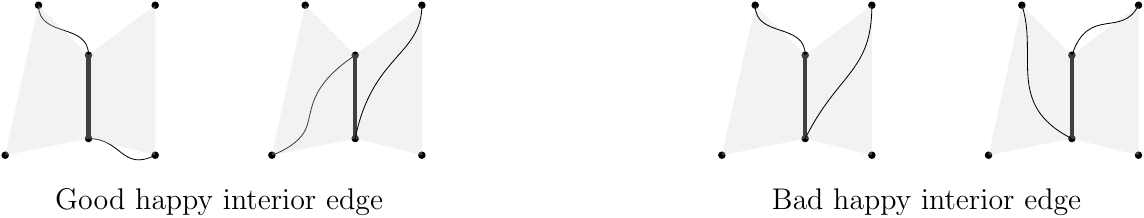}
		\caption{Left: Good happy interior edge with subpaths emanating to the same side\\
			Right: Bad happy interior edge with subpaths emanating to different sides}
		\label{good_bad_polygons}
	\end{figure}
	
	Finally, we observe as before that the information about interior edges and the way how subpaths emanate from them is sufficient to reconstruct the entire path. This shifts the problem of flipping paths to the problem of positioning diagonals correctly. We have obtained all the necessary ingredients so that we can follow the lines of the proof of Theorem~\ref{characterization} and can thus conclude:
	
	\begin{theorem}\label{polygons}
		Let $P_{in}$ and $P_{tar}$ be two plane spanning paths in the same simple polygon. Let $k$ and $l$ denote the number of interior edges of $P_{in}$ and $P_{tar}$, respectively. Then the flip distance between $P_{in}$ and $P_{tar}$ is described by the following cases.
		\begin{itemize}
			\item \textbf{Case 1:} If good happy interior edges exist, let $m \geq 1$ be the maximal number of good happy interior edges in a subpath of consecutive happy edges. Then the flip distance is $k+l-2m$.
			\item \textbf{Case 2:} If no good happy interior edges exists, then we distinguish three cases
			\subitem \textbf{Case 2a:} If there exists a pair of interior edges $d_1 \in P_{in}$, $d_2 \in P_{tar}$ that can eventually be exchanged for one another by a Type 3 flip, then the flip distance is $k+l-1$.
			\subitem \textbf{Case 2b:} If no Type 3 flip can be performed and if the interior edges can be flipped to boundary edges in both, the initial and target path, such that the paths after flipping all interior edges coincide, the flip distance is $k+l$.
			\subitem \textbf{Case 2c:} Otherwise the flip distance is $k+l+1$.
		\end{itemize}
		All the checks for the case distinction above can be implemented to run in total time and memory that is linear in the instance size.
	\end{theorem}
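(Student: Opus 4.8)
The plan is to mirror the proof of Theorem~\ref{characterization} essentially verbatim under the dictionary ``diagonal $\leftrightarrow$ interior edge'' and ``convex hull edge $\leftrightarrow$ boundary edge.'' The whole argument rests on three structural ingredients, which are exactly the polygon analogues of the tools of Section~\ref{basics} and are summarized in the itemized observations preceding the theorem: (i) a Type~1 flip changes the number of interior edges by exactly one, and whenever interior edges are present some Type~1 flip decreases their count (the analogue of Lemma~\ref{obs1}); (ii) every flip sequence must remove every bad happy interior edge, while any subpath $R$ of consecutive happy edges containing a good happy interior edge can be preserved throughout some flip sequence (the analogue of Lemma~\ref{obs2}); and (iii) the set of interior edges together with the sides into which their incident subpaths emanate already determines the whole path, since between two consecutive interior edges the path is forced to run along the boundary. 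I would first record these three facts carefully, as everything else is bookkeeping built on top of them.

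Granting the ingredients, each case transfers directly. For \textbf{Case~1} I would decompose the vertices into classes that lie on the same side of every good happy interior edge; exactly two classes contain a path endpoint, and to alter the path inside any other class one must first delete a good happy interior edge bounding it. Hence all good happy interior edges outside one chosen subpath $R$ must be flipped, and writing $m_R$ for the number of good happy interior edges in $R$, ingredients (i)--(ii) let us remove the $k-m_R$ remaining interior edges of $P_{in}$ and then add the $l-m_R$ interior edges of $P_{tar}$ at a cost of $k+l-2m_R$; optimizing over $R$ gives $k+l-2m$. \textbf{Case~2a} is the Type~3 shortcut: a single Type~3 flip is the only way to trade two interior edges at once, so $k-1$ removals, one Type~3 flip, and $l-1$ additions yield $k+l-1$, with the boundary subpaths automatically aligned as in Figure~\ref{flips_polygons}(d). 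In \textbf{Case~2b}, when the interior edges can be cleared so that both paths reach the same all-but-one-boundary-edge path (a common gap), $k+l$ flips are necessary by (i) and sufficient; here I would use that the polygon boundary is a Hamiltonian cycle on the vertex set, so a path of boundary edges is precisely the boundary minus one gap, making the gap analysis of Remark~\ref{case2b} go through unchanged. \textbf{Case~2c} is the remaining situation, where one extra Type~2 flip realigns the boundary; its optimality follows the Case~2c argument of Theorem~\ref{characterization}, showing that adding a target interior edge before the boundary is realigned would create a bad happy interior edge that (ii) forces us to remove again.

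The linear-time claim is then obtained exactly as in Theorem~\ref{in_P}: store for each vertex its one or two path-neighbors in arrays for $P_{in}$ and $P_{tar}$, count interior edges while reading them off, detect good happy interior edges and the best subpath $R$ in one traversal, test Type~3 feasibility locally at each interior edge, and use a boundary-edge array indexed by the polygon's edges to detect a common gap. Each check is $O(1)$ per edge across a constant number of traversals, giving total linear time and space.

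The part I expect to be genuinely delicate, and the only place where the argument is not a literal copy of the convex case, is ingredient~(i). In a convex set every pair of vertices is mutually visible, so the reduction flip ``add a boundary edge at an endpoint and delete the first interior edge'' is immediate; in a general simple polygon the boundary may be reflex and block visibility, so that particular edge need not belong to the visibility graph. I would try to establish (i) by induction on the number of interior edges, exploiting that the interior edges of a plane path are pairwise non-crossing chords of the boundary cycle and hence form a laminar family: peeling an innermost such chord together with the boundary arc it cuts off should expose a Type~1 flip that trades the chord for a boundary edge, the added edge lying on the boundary and therefore crossing nothing. Pinning down that a valid such flip always exists despite restricted visibility---together with the extra ``the two path endpoints must see each other'' condition attached to Type~3 flips, which has to be folded into the Case~2a test---is the crux; once (i) is secured, the remainder of the proof is the convex argument with relabeled objects.
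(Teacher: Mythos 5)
Your proposal follows the paper's own route almost exactly: the paper proves Theorem~\ref{polygons} precisely by recording the three polygon analogues you isolate as ingredients (i)--(iii) (the itemized flip observations and the good/bad happy interior edge dichotomy preceding the theorem) and then replaying the case analysis of Theorem~\ref{characterization} and the array-based bookkeeping of Theorem~\ref{in_P} under the dictionary you describe, so your treatment of Cases~1, 2a, 2b, 2c and of the linear-time claim coincides with the intended argument.

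The one place you diverge is ingredient~(i), which the paper asserts by analogy (with Figure~\ref{flips_polygons}) rather than proving, and which you rightly flag as the only genuinely polygon-specific point. However, your laminar-family induction is both more complicated than necessary and, as stated, not quite a valid flip: a Type~1 flip is anchored at a path endpoint --- adding $(p_1,p_i)$ forces the removal of $(p_{i-1},p_i)$ --- so ``peeling an innermost chord'' is only available when that chord is the \emph{first} interior edge encountered from an endpoint. Fortunately, the convex proof of Lemma~\ref{obs1} transfers verbatim to exactly this chord: if $d=(v_a,v_b)$ is the first interior edge along the path from an endpoint $v$, with $v_a$ reached first, then the prefix from $v$ to $v_a$ consists of boundary edges, the path beyond $v_b$ cannot recross $d$, and since both sides of the chord $d$ contain vertices, the spanning property forces the prefix to exhaust the entire boundary chain on $v$'s side of $d$. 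Hence $v$ is the boundary neighbor of $v_b$, the added edge $(v,v_b)$ is a \emph{boundary} edge (confirming your intuition that the added edge lies on the boundary and crosses nothing), and the flip removes $d$; no visibility obstruction can arise. Similarly, the extra endpoint-visibility condition for Type~3 flips that you propose to fold into the Case~2a test is automatic rather than an additional geometric check: the edge added by the Type~3 flip is precisely $d_2\in P_{tar}$, hence lies inside the polygon by assumption, and it necessarily crosses $d_1$ because its two endpoints lie in the two different subpolygons determined by $d_1$, so Case~2a reduces to the same constant-time combinatorial test per interior edge as in the convex setting and the linear-time bound stands.
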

	
	From Theorem \ref{polygons} we can bound the number of flips that it takes to go from one path in a simple polygon to any other.
	
	\begin{corollary}\label{cor:diam}
		The diameter of the flip graph of paths in a simple polygon of size $n$ is $\leq 2n-6$ for~$n>4$ and $\leq 2n-5$ for $n\in\{3,4\}$.
	\end{corollary}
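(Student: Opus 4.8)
The plan is to derive everything from Theorem \ref{polygons}, whose largest possible value (attained in Case~2c) is $k+l+1$; so the flip distance between any two paths is at most $k+l+1$. To turn this into a bound in terms of $n$, I would prove two structural facts: an upper bound on the number of interior edges of a single path, and a pigeonhole statement about gaps in the boundary that rules out the expensive Case~2c when $k$ and $l$ are large.

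First I would show that any plane spanning path in a simple polygon on $n$ vertices has at most $n-3$ interior edges. The interior edges of a plane path are pairwise non-crossing diagonals of the polygon, and any such family extends to a triangulation, which has exactly $n-3$ diagonals; hence $k\le n-3$ and $l\le n-3$. In particular $k+l\le 2n-6$, and the crude estimate already gives flip distance $\le k+l+1\le 2n-5$. This settles the cases $n\in\{3,4\}$ directly, and I would note tightness here: for $n=4$ the instance of Figure~\ref{smaller_example} (a convex, hence simple, polygon) already requires $3=2n-5$ flips, and for $n=3$ any two distinct paths are one flip apart.

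For $n>4$ I would sharpen the bound via a gap count. A path with $k$ interior edges uses $n-1-k$ boundary edges and therefore leaves exactly $k+1$ of the $n$ boundary edges uncovered, i.e.\ has $k+1$ gaps. If $P_{in}$ and $P_{tar}$ had no common gap, their gap sets would be disjoint, forcing $(k+1)+(l+1)\le n$, that is $k+l\le n-2$; contrapositively, $k+l\ge n-1$ guarantees a common gap. I then split into two cases. If $k+l\ge n-1$, a common gap exists, so Case~2c is excluded and Theorem~\ref{polygons} gives distance $\le k+l\le 2(n-3)=2n-6$. If instead $k+l\le n-2$, the worst-case estimate gives distance $\le k+l+1\le n-1\le 2n-6$, the last inequality using $n\ge 5$. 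In either case the distance is at most $2n-6$, which is the claimed bound for $n>4$ (and agrees with the known convex diameter, so it cannot be improved in general).

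The step that needs the most care is the passage from ``common gap'' to ``not Case~2c''. I would verify the polygon analogue of Remark~\ref{case2b}: when $g$ is a gap of both paths, each path can be flipped, removing interior edges only, into the unique boundary path whose single missing edge is $g$, so the two reduced paths coincide and the existential condition of Case~2b holds (with the cheaper Cases~1 and~2a only lowering the count further). Once this is in place the remaining arguments are just the counting above, and the two regimes $k+l\ge n-1$ and $k+l\le n-2$ together cover all pairs for $n>4$.
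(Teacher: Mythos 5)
Your proposal is correct and follows essentially the same route as the paper's proof: bound $k+l\leq 2n-6$ via $k,l\leq n-3$, count $k+1$ and $l+1$ boundary gaps and apply the pigeonhole principle to exclude Case~2c of Theorem~\ref{polygons} when $k+l\geq n-1$, and handle $k+l\leq n-2$ with the crude $k+l+1$ bound for $n>4$. Your additions (the triangulation argument for $k\leq n-3$, and explicitly flagging that the polygon analogue of Remark~\ref{case2b} is needed to pass from a common gap to excluding Case~2c) merely spell out steps the paper asserts implicitly.
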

	
	\begin{proof}
		Any path in a simple polygon has at most $n-3$ interior edges. With the notation of Theorem \ref{polygons}, $k+l \leq 2n-6$. Additionally, we observe that for every interior edge, we lose one boundary edge and get an additional gap in the boundary. So, if $k+l \geq n-1$, there is a total of $\geq n+1$ gaps in $P_{in}$ and $P_{tar}$. Then, by the pidgeonhole principle, the two paths have to share a gap in the boundary, and Case 2c of Theorem \ref{polygons} cannot occur. If the polygon has $3$ or $4$ vertices there could exist two paths that together cover the entire boundary of the polygon. For $n>4$ the $2n-6\geq n-1$ and for any choice of $k$ and $l$ we get a flip distance of at most $2n-6$.
	\end{proof}
	
	The upper bound of Corollary \ref{cor:diam} is attained for convex polygons. In general, the bound is not tight. The interested reader may check that the polygon in Figure \ref{icecream} does not admit paths with more than one diagonal. Therefore, by Theorem \ref{polygons}, any path in this polygon can be flipped into any other path in at most $3$ flips. Further, the construction of the polygon can be generalized to contain an arbitrary number of vertices.
	
	\begin{figure}[ht]
		\centering
		\includegraphics[width=0.7\textwidth]{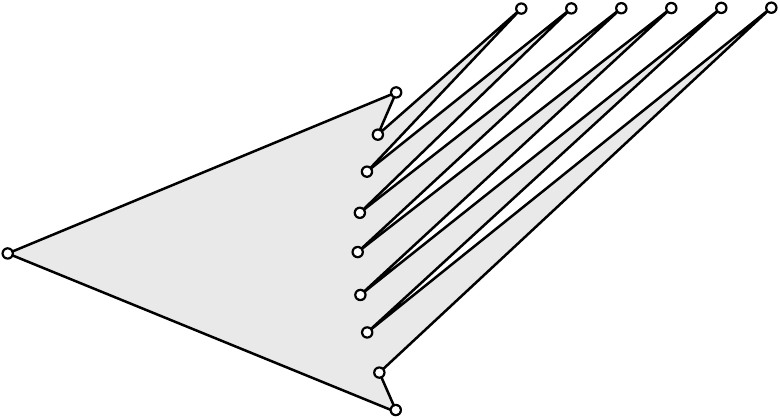}
		\caption{Polygon that only allows paths with at most one interior edges and where the flip graph has diameter $3$}
		\label{icecream}
	\end{figure}
	
	%Any diagonal in the icecream cone polygon is incident to the vertex $p$. Therefore, any path has at most $2$ diagonals. Further, any path with two interior edges contains both gaps that are incident to $p$. Consequently, if $P_{in}$ and $P_{tar}$ have two interior edges each, they will share a gap incident to $p$. In any case of Theorem \ref{polygons} the flip distance will end up to be at most $4$.
	
	%If the polygon $6$ or more vertices, we can triangulate the interior of the polygon and end up with at least $4$ pairwise crossing-free interior edges. We can construct two paths
	
	\break
	
	\section{Conclusion}
	
	We disproved the happy edge property for plane spanning paths on convex point sets. At the same time, we provided a linear time algorithm to compute the flip distance between two given plane spanning paths. This contradicts the assumption that the absence of the happy edge property makes finding minimum flip sequences hard to solve. We are not aware of any problem for which the opposite direction fails, that is, a reconfiguration problem for which the happy edge property is true, but it is hard to compute the flip distance.
	
	The central questions about the flip graph of paths in point sets in general position still remain open: Is the flip graph connected? And if yes, how many flips do we need in the worst case? A related question is, how hard it is to find a shortest flip sequence between two spanning paths (or to decide that there exists none)?
	
	We have also seen that we can determine the flip distance between paths in simple polygons in linear time. A natural question to generalize this would be to consider questions about paths in polygons with holes.% However, flipping triangulations of convex polygons marks a famous example where the happy edge property holds (see~\cite{2ae353801697435f901a750248959ba2}), but the algorithmic complexity of finding minimum flip sequences is a main open question.
	
	%	Two interesting open problems are:
	%	%Some open problems related to our work are:
	%	\begin{enumerate}
		%		\item Can we flip between any two plane spanning paths for point sets in general position?
		%		\item If the answer to the previous question is yes, how many flips does it take in the worst case and what is the complexity of computing the flip distance between two given paths?
		%	\end{enumerate}
	
	%	\section*{Acknowledgements}
	%	
	%	The results of this work were obtained during the PAGCAP workshop in Strobl in May 2024. We want to thank the participants of the workshop for fruitful discussions and a productive environment. Further, we want to thank the anonymous reviewers for helpful remarks.
	
	\bibliography{paths_arXiv.bib}
	
\end{document}